\documentclass[11pt,letterpaper]{article}

\usepackage{fullpage}
\usepackage{hyperref}
\usepackage[numbers]{natbib}
\usepackage{xspace}
\usepackage{amsmath,amssymb,amsthm,amsfonts}
\usepackage{mathrsfs}
\usepackage{dsfont}
\usepackage[figure,linesnumbered]{algorithm2e}

\usepackage{tikz}

\newtheorem{theorem}{Theorem}
\newtheorem{lemma}{Lemma}
\newtheorem{corollary}{Corollary}

\newcommand{\email}[1]{\href{mailto:#1}{\nolinkurl{#1}}}

\newcommand{\secref}[1]{Section~\ref{#1}}
\newcommand{\thmref}[1]{Theorem~\ref{#1}}
\newcommand{\lemref}[1]{Lemma~\ref{#1}}
\newcommand{\figref}[1]{Figure~\ref{#1}}
\newcommand{\lineref}[1]{Line~\ref{#1}}

\newcommand{\ie}{i.e.,\xspace}
\newcommand{\eg}{e.g.,\xspace}

\renewcommand{\deg}[2]{\delta^{-}_{#1}(#2)}
\newcommand{\degb}[2]{\delta^{-}_{#1}\bigl(#2\bigr)}
\newcommand{\degp}[1]{\delta^{+}(#1)}
\newcommand{\Deg}{\Delta}

\renewcommand{\vec}[1]{{\boldsymbol{#1}}}

\newcommand{\A}{\vec{A}}
\newcommand{\E}{\mathbb{E}}
\newcommand{\G}{\mathcal{G}}
\newcommand{\N}{\mathbb{N}}
\newcommand{\Q}{\mathbb{Q}}
\renewcommand{\P}{\mathbb{P}}
\newcommand{\Part}{\mathcal{P}}
\renewcommand{\part}{P}
\renewcommand{\S}{\vec{S}}
\newcommand{\Scal}{\mathcal{S}}
\newcommand{\midd}{:}
\newcommand{\cond}{\,|\,}
\newcommand{\imax}{i^*}
\newcommand{\dmax}{d^*}
\newcommand{\xmax}{\hat{x}}
\newcommand{\xmin}{\check{x}}

\newcommand{\chr}[1]{\ifthenelse{\equal{#1}{}}{\chi}{\chi(#1)}}
\newcommand{\chrb}[1]{\chi\bigl(#1\bigr)}

\newlength{\wordlength}

\def\clap#1{\hbox to 0pt{\hss#1\hss}}
 \def\mathrlap{\mathpalette\mathrlapinternal} \def\mathclap{\mathpalette\mathclapinternal}

\def\mathrlapinternal#1#2{\rlap{$\mathsurround=0pt#1{#2}$}}
\def\mathclapinternal#1#2{\clap{$\mathsurround=0pt#1{#2}$}}

\sloppy

\begin{document}

\title{Optimal Impartial Selection\thanks{Part of the work was done while the first author enjoyed the hospitality of the Combinatorial Optimization and Graph Algorithms group at TU Berlin. Valuable discussions with Paul D\"utting and Frank Kelly are gratefully acknowledged.}}

\author{%
	Felix Fischer\thanks{%
	Statistical Laboratory,
	University of Cambridge,
	Wilberforce Road, Cambridge CB3 0WB, UK.
	Email: \email{fischerf@statslab.cam.ac.uk}.}
	\and Max Klimm\thanks{%
	Technische Universit\"at Berlin,
	Institut f\"ur Mathematik, 
	Stra\ss{}e des 17.~Juni 136, 10623 Berlin, Germany. 
	Email: \email{klimm@math.tu-berlin.de}}}
	
\date{}
	
\maketitle

\begin{abstract}
	We study the problem of selecting a member of a set of agents based on impartial nominations by agents from that set. The problem was studied previously by \citeauthor{AFPT11a} and \citeauthor{HoMo13a} and has important applications in situations where representatives are selected from within a group or where publishing or funding decisions are made based on a process of peer review. Our main result concerns a randomized mechanism that in expectation awards the prize to an agent with at least half the maximum number of nominations. Subject to impartiality, this is best possible.
\end{abstract}

\section{Introduction}

We consider a situation where members of a set of agents nominate other agents from the set for a prize and the goal is to award the prize to an agent who receives a large number of nominations. This situation arises naturally, for example, when representatives are selected from within a group or when publishing or funding decisions are made based on a process of peer review. While nominations are at the discretion of the nominating agents, it is often reasonable to assume that agents are impartial to the selection of others and will nominate who they think should receive the prize as long as they cannot influence their own chances of receiving it. Indeed, the assumption of impartiality was previously made, and justified, in the very same setting~\citep{AFPT11a,HoMo13a}.

Formally, the situation can be captured by a directed graph with~$n$ vertices, one for each agent, in which edges correspond to nominations. A selection mechanism then chooses a vertex for any given graph, and impartiality requires that the chances of a particular vertex to be chosen do not depend on its outgoing edges.
It is easy to see that an impartial mechanism cannot always select a vertex with maximum indegree, corresponding to an agent with a maximum number of nominations, even when $n=2$. We therefore aim at maximizing the indegree of the selected vertex relative to the maximum indegree, and call a mechanism $\alpha$-optimal, for $\alpha\leq 1$, if for any graph the former is at least $\alpha$ times the latter. We focus here on the selection of a single agent, but note that it is an interesting question whether optimal mechanism for the general case can be obtained directly from mechanisms for selecting a single agent or whether their design requires additional techniques.

\paragraph{State of the Art}

\citet{AFPT11a} and \citet{HoMo13a} showed independently that deterministic impartial mechanisms are extremely limited, and
must sometimes select an agent with zero nominations even though agents are being nominated, or an agent with one nomination
when another agent receives $n-1$ nominations.

On the other hand, \citeauthor{AFPT11a} proposed a simple randomized mechanism that partitions the agents into two sets $S_1$ and $S_2$ and selects an agent from $S_2$ who among agents in this set receives a maximum number of nominations by agents in $S_1$. By linearity of expectation the mechanism is at least $1/4$-optimal, and a situation with a single nomination shows that it cannot do better. A somewhat closer inspection of situations with one or two nominations shows that no impartial mechanism can be better than $1/2$-optimal. While these bounds are almost trivial, no improvements have been obtained that hold for general values of $n$, despite considerable efforts. Moreover, improving the lower bound appears just as difficult in the special case where each agent submits exactly one nomination, as considered by \citeauthor{HoMo13a}. This is somewhat embarrassing, as the mechanism of \citeauthor{AFPT11a} should intuitively be better than $1/4$-optimal as soon as there is more than just a single nomination.

\paragraph{Our Contribution}

\citeauthor{AFPT11a}'s analysis of the $2$-partition mechanism is tight and yields a constant approximation ratio, only a factor of two away from the best possible one. Quite strikingly, however, the analysis requires almost no understanding of the mechanism or the problem the mechanism is trying to solve. As a consequence it does not lead to stronger bounds for special cases, like the setting with one nomination per agent studied by \citeauthor{HoMo13a}, and cannot be extended to more complicated mechanisms.

Our first result attempts to close this gap in our understanding of the $2$-partition mechanism, by providing a lower bound on its performance relative to the maximum indegree. Among other things, this yields a lower bound of $3/8$ for the case where each agent submits at least one nomination. Our analysis uses a novel adversarial argument that allows us to abstract from the underlying graph structure and isolate the critical aspects of difficult problem instances.

More interestingly, our analysis suggests a natural generalization of the $2$-partition mechanism that partitions the set of agents into $k>2$ sets and iteratively considers the nominations submitted by agents in more and more of these sets, to fewer and fewer candidates in the remaining sets. Intuitively this increases the probability of each individual nomination to be counted, which is particularly important in the difficult cases with a small overall number of nominations. Exactly how information from an earlier stage of the mechanism can be used without a negative effect on later stages turns out to be somewhat intricate.

We then generalize the adversarial analysis to show that the $k$-partition mechanism provides an approximation ratio of $(k-1)/2k$, which approaches the upper bound of $1/2$ as $k$ tends to infinity. This implicitly provides an analysis of a limiting mechanism, in which agents are considered one by one according to a random permutation.

We finally give the first non-trivial bounds for settings without abstentions, where the permutation mechanism is at least $7/12$-optimal and at most $2/3$-optimal, and no impartial mechanism can be more than $3/4$-optimal. Quite intriguingly, the exact upper bounds approach $3/4$ from below as the number of agents grows and are tight for small numbers of agents. This can be seen as evidence that optimal mechanisms in this case might be rather difficult to find.

\paragraph{Related Work and Applications}

Impartial decision making was first considered by \citet{CMT08a}, for the case of a divisible resource to be shared among a set of agents. While the difference between a divisible resource and the indivisible resource considered in this paper disappears for randomized mechanisms, \citeauthor{CMT08a} studied mechanisms with a more general message space that allows for fractional nominations, and at the same time aimed for weaker requirements to be achieved besides impartiality.

\citet{AFPT11a} framed the problem considered here as one of designing approximately optimal strategyproof mechanisms without payments, an agenda proposed by~\citet{PrTe13a} and earlier by~\citet{DFP10a}. Strategyproofness requires that an agent maximizes its utility by truthfully revealing its preferences and is equivalent to impartiality if the utility of an agent only depends on its chances of being selected. While this assumption seems somewhat restrictive, \citeauthor{AFPT11a} pointed out that their results in fact hold for any setting where agents give their own selection priority over that of their nominees.

Strategyproof selection is an important component of the peer review process for scientific articles and project proposals. For its Sensors and Sensing Systems program, the National Science Foundation (NSF) recently introduced a mechanism in which proposals are reviewed by other applicants, and acceptance of an applicant's own proposal depends in part on the extent to which the \emph{reviews} submitted by the applicant agree with other reviews of the same proposals. The specific mechanism used by the NSF was originally proposed by \citet{MeSa09a} in the context of allocation of telescope time. Whether the mechanism provides the right incentives in peer review is debatable, but its lack of impartiality, which in this case is deliberate, would make it very hard to show any formal incentive properties. By contrast, our results allow for a separation of preferences regarding an agent's own selection and those regarding the selection of others, and can in principle be combined with peer prediction techniques~\citep[\eg][]{WiPa12a} to provide strict incentives for the truthful evaluation of other agents. The exact properties achievable by such hybrid mechanisms deserve further investigation.

Impartial selection is also more distantly related to work in distributed computing on leader election~\citep[\eg][]{AlNa93a,CoLi95a,Feig99a,Anto06a} and work on the manipulation of reputation systems~\citep[\eg][]{FRS07a}. Leader election seeks to guarantee the selection of a non-malicious agent in the presence of malicious agents trying to manipulate the selection process. Work on reputation systems often considers models with more complex preference and message spaces, where maximization of a one-dimensional objective does not suffice.

The $2$-partition mechanism, finally, is reminiscent of random sampling in unlimited-supply auctions~\citep{FGHK02a,GHKo06a,FFHK05a} and combinatorial auctions~\citep{DNS06a}. It will be interesting to see whether our more complicated mechanisms and analysis techniques can be applied to these settings in a meaningful way.

\section{Preliminaries}

For $n\in\N$, let $\G_n=\{(N,E)\midd N=\{1,\dots,n\}, E\subseteq (N\times N)\setminus\bigcup_{i\in N}(\{i\}\times\{i\})\}$ be the set of directed graphs with $n$ vertices and no loops. Let $\G=\bigcup_{n\in\N}\G_n$. For $G=(N,E)\in\G$, $S\subseteq N$, and $i\in N$, let $\deg{S}{i,G}=|\{(j,i)\in E\midd G=(N,E),j\in S\}|$ denote the indegree of vertex $i$ from vertices in $S$. We use $\deg{}{i,G}$ as a shorthand for $\deg{N}{i,G}$, denote $\Deg(G)=\max_{i\in N}\deg{}{i,G}$, and write $\deg{}{i}$ instead of $\deg{}{i,G}$ and $\Deg$ instead of $\Deg(G)$ if $G$ is clear from the context.

A \emph{selection mechanism} for $\G$ is then given by a family of functions $f:\G_n\rightarrow[0,1]^n$ that maps each graph to a probability distribution on its vertices. In a slight abuse of notation, we use $f$ to refer to both the mechanism and individual functions from the family. 
Mechanism~$f$ is \emph{impartial} on $\G'\subseteq\G$ if on this set of graphs the probability of selecting vertex~$i$ does not depend on its outgoing edges, \ie if for every pair of graphs $G=(N,E)$ and $G'=(N,E')$ in $\G'$ and every $i\in N$, $(f(G))_i=(f(G'))_i$ whenever $E\setminus(\{i\}\times V)=E'\setminus(\{i\}\times V)$. All mechanisms we consider are impartial on $\G$, and we simply refer to such mechanisms as impartial mechanisms. 
Mechanism~$f$ is \emph{$\alpha$-optimal} on $\G'\subseteq\G$, for $\alpha\leq 1$, if for any graph in $\G'$ the expected degree of the vertex selected by $f$ differs from the maximum degree by a factor of at most $\alpha$, \ie if
\[
	\inf_{\mathclap{\substack{G\in\G\\\Delta(G)>0}}} \frac{\E_{i\sim f(G)}[\deg{}{i,G}]}{\Deg(G)}\geq \alpha .
\]
We call a mechanism $\alpha$-optimal if it is~$\alpha$-optimal on~$\G$, and approximately optimal if it $\alpha$-optimal for some constant~$\alpha$.

As far as impartiality and approximate optimality are concerned, we can restrict our attention to symmetric mechanisms. Mechanism $f$ is \emph{symmetric} if it is invariant with respect to renaming of the vertices, \ie if for every $G=(N,E)\in\G$, every $i\in N$, and every permutation $\pi:N\rightarrow N$,
\[
	(f(G_\pi))_{\pi(i)} = (f(G))_i ,
\]
where $G_\pi=(N,E_\pi)$ with $E_\pi=\{(\pi(i),\pi(j))\midd (i,j)\in E\}$.
For a given mechanism $f$, denote by $f_s$ the mechanism obtained by applying a random permutation $\pi$ to the vertices to the input graph, invoking $f$, and permuting the result by the inverse of $\pi$, such that for all $n\in\N$, $G\in\G_n$, and $i\in\{1,\dots,n\}$,
\[
	\bigl(f_s(G)\bigr)_i = \frac{1}{n!} \sum_{\mathclap{\pi\in\Scal_n}} \bigl(f(G_\pi)\bigr)_{\pi_i} ,
\]
where $\Scal_n$ is the set of all permutations $\pi=(\pi_1,\dots,p_n)$ of a set of $n$ elements.
The following result is straightforward.
\begin{lemma}[\citet{HoMo13a}]  \label{lem:symmetric}
	Let $f$ be a selection mechanism that is impartial and $\alpha$-optimal on $\G'\subseteq\G$. Then $f_s$ is impartial, $\alpha$-optimal, and symmetric on $\G'$.
\end{lemma}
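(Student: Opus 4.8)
The plan is to verify each of the three claimed properties of $f_s$ in turn, exploiting the fact that $f_s$ is obtained from $f$ by the symmetrization $\bigl(f_s(G)\bigr)_i = \frac{1}{n!}\sum_{\pi\in\Scal_n}\bigl(f(G_\pi)\bigr)_{\pi_i}$, which is an average over a group action on the vertex set. First I would record the harmless but useful observation that $f_s$ is a legitimate selection mechanism, \ie that $\sum_i \bigl(f_s(G)\bigr)_i = 1$ for every $G$, since each term $\sum_i \bigl(f(G_\pi)\bigr)_{\pi_i}$ is just a relabelling of $\sum_j \bigl(f(G_\pi)\bigr)_j = 1$.

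For symmetry, I would fix a graph $G$ and a permutation $\sigma$ and show directly that $\bigl(f_s(G_\sigma)\bigr)_{\sigma(i)} = \bigl(f_s(G)\bigr)_i$. Expanding the left-hand side gives an average over $\pi\in\Scal_n$ of $\bigl(f((G_\sigma)_\pi)\bigr)_{\pi_{\sigma(i)}}$, and the key algebraic identity is $(G_\sigma)_\pi = G_{\pi\circ\sigma}$, which follows immediately from the definition of $E_\pi$. Substituting $\rho = \pi\circ\sigma$ and using that $\rho\mapsto\pi$ is a bijection on $\Scal_n$, the sum reindexes to the defining average for $\bigl(f_s(G)\bigr)_i$; this is the cleanest of the three parts.

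For impartiality, the plan is to argue that averaging preserves the property that an agent's selection probability is independent of its outgoing edges. Concretely, suppose $G=(N,E)$ and $G'=(N,E')$ agree except on the outgoing edges of vertex $i$. Under any permutation $\pi$, the graphs $G_\pi$ and $G'_\pi$ then agree except on the outgoing edges of vertex $\pi(i)$, so impartiality of $f$ gives $\bigl(f(G_\pi)\bigr)_{\pi(i)} = \bigl(f(G'_\pi)\bigr)_{\pi(i)}$ term by term. Because $\pi_i$ in the summand is exactly $\pi(i)$, every term in the average defining $\bigl(f_s(G)\bigr)_i$ matches the corresponding term for $\bigl(f_s(G')\bigr)_i$, and impartiality of $f_s$ follows. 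The only subtlety here is bookkeeping the distinction between permuting the graph and evaluating the coordinate indexed by $\pi_i$; once that is set up correctly the argument is immediate.

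For $\alpha$-optimality, I would use that both the expected degree of the selected vertex and the maximum degree are invariant under relabelling. Since $\Deg(G_\pi) = \Deg(G)$ and, by linearity and the reindexing used above, $\E_{i\sim f_s(G)}[\deg{}{i,G}] = \frac{1}{n!}\sum_\pi \E_{j\sim f(G_\pi)}[\deg{}{j,G_\pi}]$, the ratio for $f_s$ on $G$ is an average of the ratios for $f$ on the graphs $G_\pi$, each of which is at least $\alpha$ by hypothesis; hence so is the average. The main thing to get right is that $\deg{}{\pi(i),G_\pi} = \deg{}{i,G}$ so that the degree weights transform consistently with the probabilities under the permutation. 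I expect this invariance check to be the most error-prone step, but it is routine rather than conceptually hard, so overall the lemma reduces to careful unpacking of the three definitions against the group-averaging formula.
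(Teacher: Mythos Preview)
The paper does not actually supply a proof of this lemma: it simply records the result as ``straightforward'' and attributes it to \citet{HoMo13a}. Your proposal is a correct and complete unpacking of exactly the routine verification the authors have in mind, so there is nothing substantive to compare against.

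One small point worth flagging: your impartiality and $\alpha$-optimality arguments apply $f$'s guarantees to the permuted graphs $G_\pi$, which tacitly assumes that $\G'$ is closed under vertex relabellings. The lemma as stated does not make this explicit, but every class $\G'$ the paper actually uses (all of $\G$, and later $\G^{+}$) is manifestly permutation-invariant, so the assumption is harmless in context.
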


\section{The $2$-Partition Mechanism}
\label{sec:2_partition}

We begin our investigation with a more detailed analysis of the $2$-partition mechanism proposed by \citet{AFPT11a}. The mechanism first assigns each vertex
independently and uniformly at random to one of two sets $A_1$ and $A_2$, such that $A_1\cup A_2=N$, $A_1\cap A_2=\emptyset$, and $\P[i\in A_1] = \P[i\in A_2]=1/2$ for all $i\in N$. Then it returns a vertex from $A_2$ that has maximum indegree from vertices in $A_1$, or a vertex chosen uniformly at random from $N$ in case $A_2=\emptyset$. A formal description of the mechanism is given in \figref{alg:2_partition}.
\begin{algorithm}[tb]
\KwIn{Graph $G=(N,E)$}
\KwOut{Vertex $i\in N$}
\label{line:2_partition}
Assign each $i\in N$ independently and uniformly at random to one of two sets $A_1$ and $A_2$\;
\lIf{$A_2=\emptyset$}{return a vertex chosen uniformly at random from $N$}\;
Return a vertex chosen uniformly at random from $\arg\max_{i\in A_2}\deg{A_1}{i}$\;
\caption{The $2$-partition mechanism}
\label{alg:2_partition}
\end{algorithm}

The $2$-partition mechanism is obviously impartial, as any given vertex is either in $A_1$, in which case it will never be selected, or in $A_2$, in which case its outgoing edges have no influence on the outcome of the mechanism. It is also easy to see that the mechanism is $1/4$-optimal. For an arbitrary graph $G$ and a particular vertex $\imax$ in $G$ with degree $\Delta=\Delta(G)$, we have that $\P[\imax\in A_2]=1/2$ and, by linearity of expectation, $\E[\deg{A_1}{\imax}\cond\imax\in A_2]=\E[\deg{A_1}{\imax}]=\deg{N}{\imax}/2=\Delta/2$. The expected degree of the selected vertex is thus at least $\Delta/2$ with probability at least $1/2$, \ie at least $\Delta/4$. A graph with a single edge shows that this result is in fact tight.
An upper bound on $\alpha$ for \emph{any} impartial mechanism can be obtained by considering the two graphs in \figref{fig:upper}, and the probabilities $p_1$, $p_2$, and $p_3$ with which certain vertices in these graphs are selected. Due to symmetry, which we can assume by \lemref{lem:symmetric}, $p_1=p_2$ and thus $p_1\leq 1/2$. On the other hand, $p_1=p_3$ by impartiality, so the expected degree of the vertex selected in the right graph is at most $1/2$ and the claim follows.
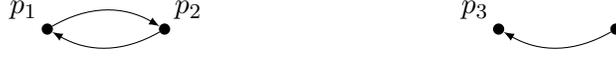
\begin{figure}[tb]
	\centering
	\begin{tikzpicture}
		\tikzstyle{every circle node}=[fill,inner sep=1.5pt,outer sep=0pt]
		\draw (0,0) +(150:.9) node(1)[circle]{} node[above left]{$p_1$} +(30:.9)
node(2)[circle]{} node[above right]{$p_2$};
		\draw (6,0) +(150:.9) node(3)[circle]{} node[above left]{$p_3$} +(30:.9)
node(4)[circle]{};
		\foreach \x/\y in {1/2,2/1,4/3} {\draw[-latex] (\x) to[bend left] (\y);}
	\end{tikzpicture}
	\caption{No impartial mechanism is more than $1/2$-optimal}
	\label{fig:upper}
\end{figure}

This rather straightforward analysis does not lead to a tight result, but it is unsatisfactory in particular because it provides no information about the performance of the mechanism on more complicated graphs, and no cues what a better mechanism might look like. We will gain both from the proof of the following lemma, which establishes a lower bound on the expected degree of the selected vertex relative to the maximum degree $\Delta(G)$.
\begin{lemma}  \label{lem:alpha_2}
	On any graph $G$ with maximum degree $\Delta=\Delta(G)$, the $2$-partition mechanism is $\alpha_2(\Delta)$-optimal, where
\vspace*{-1ex}
\begin{equation*}
	\alpha_2(\Delta) = \frac{1}{\Delta 2^{\Delta}} \sum_{k=0}^{\Delta} \binom{\Delta}{k} \cdot \min
	\biggl\{\frac{\Delta}{2},k\biggr\}.
\end{equation*}
\end{lemma}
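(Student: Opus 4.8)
The plan is to fix a vertex $\imax$ with $\deg{}{\imax}=\Deg$ and to bound from below the \emph{total} expected indegree of the vertex $v$ returned by the mechanism. The first point to absorb is that bounding $\deg{}{v}$ by $\deg{A_1}{v}=\max_{u\in A_2}\deg{A_1}{u}$ alone is hopeless: on a star this yields only $\Deg/4$, the trivial bound, because the selected vertex collects most of its degree from edges out of $A_2$, which the mechanism never sees. The improvement over $1/4$ must therefore come from charging $\imax$ its \emph{full} degree $\Deg$ on the events where it is actually selected, so I would track total degree throughout.

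The key step is a conditioning that plays the role of the adversarial argument and lets us abstract from the graph: I would condition on the assignment of every vertex other than $\imax$ to $A_1$ and $A_2$, so that the only remaining randomness is the single fair coin deciding the side of $\imax$. Under this conditioning two numbers summarise everything relevant. The first is $k=\deg{A_1}{\imax}$, the number of in-neighbours of $\imax$ placed in $A_1$; it is fixed by the conditioning, and over the choice of conditioning it is distributed as the number of successes in $\Deg$ fair coin flips. The second is $M$, the largest $A_1$-indegree, counting only edges from placed $A_1$-vertices, among the vertices already placed in $A_2$ (set $M=0$ if none are). This $M$ is the strength of the best competitor available to the mechanism irrespective of where $\imax$ lands.

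I would then flip $\imax$'s coin and split into two regimes. If $M<k$, then when $\imax\in A_2$ it strictly outscores every competitor, is the unique maximiser, is selected, and contributes its full degree $\Deg$; when $\imax\in A_1$ the selected vertex still has $A_1$-indegree at least $M$, hence total degree at least $M$. The conditional expectation is thus at least $\tfrac12\Deg+\tfrac12 M\ge\Deg/2$. If instead $M\ge k$, then in either outcome of the coin some vertex placed in $A_2$ has $A_1$-indegree at least $M$, so the selected vertex has total degree at least $M\ge k$. In both regimes the conditional expectation is at least $\min\{\Deg/2,k\}$, so taking the expectation over the conditioning gives $\E[\min\{\Deg/2,K\}]$ for $K$ the number of successes in $\Deg$ fair coin flips, which equals $\Deg\,\alpha_2(\Deg)$; dividing by $\Deg$ proves the lemma.

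I expect the main obstacle to be conceptual rather than computational: recognising that one must carry the full degree $\Deg$ of $\imax$ through the $M<k$ regime, since using only its $A_1$-indegree $k$ there loses exactly the factor separating the true bound from $1/4$, and that the two regimes combine into $\min\{\Deg/2,k\}$. What remains is routine but must be handled carefully, namely ties in the $\arg\max$ when $M=k$, the effect of $\imax$'s own out-edges when $\imax\in A_1$ (which only increase competitors' indegrees and so only help), and the uniform fallback when $A_2=\emptyset$.
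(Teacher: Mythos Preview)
Your proposal is correct and follows essentially the same approach as the paper: condition on the placement of all vertices other than $\imax$, summarise the situation by the two numbers $k=z(S)$ and $M=a(S)$, split into the regimes $M<k$ (where $\imax$ is certainly selected when it lands in $A_2$, contributing its full degree $\Delta$) and $M\ge k$, combine to obtain the lower bound $\min\{\Delta/2,k\}$, and then average using the binomial distribution of $k$. The only cosmetic difference is that the paper writes the conditional bound as $a+\tfrac12\chi(z>a)(\Delta-a)$ and then minimises explicitly over $a$, whereas you jump directly to the two-regime minimum; both arrive at the same inequality.
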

\begin{proof}
Let $\imax\in N$ such that $\deg{}{\imax}=\Delta$, and denote by $X$ the degree of the agent selected by the $2$-partition
mechanism. Then $X$ is a random variable subject to the internal randomness of the mechanism, and we will be interested
in its expected value $\E[X]$.

Let $\vec{A}=(A_1,A_2)$ be the partition selected in \lineref{line:2_partition} of the $2$-partition mechanism in \figref{alg:2_partition}, and consider an
arbitrary set $S\subseteq N\setminus\{\imax\}$ of vertices other than $\imax$. We begin by bounding $\E[X\cond A_1\setminus\{\imax\}=S]$, \ie the expected value of $X$ given that $A_1=S$ or $A_1=S\cup\{\imax\}$.
To this end, let $z(S)$ and $a(S)$ denote the indegree of $\imax$ from $S$ and the maximum degree of any \emph{other} element of $N\setminus S$ from $S$, \ie $z(S)=\deg{S}{\imax}$ and $a(S)=\max\nolimits_{i\in N\setminus(S\cup\{\imax\})}\deg{S}{i}$.

Assume for now that $S\neq\emptyset$ and $S\neq N\setminus\{\imax\}$. Then, $\E[X\cond A_1=S]=\Delta$ if $z(S)>a(S)$, $\E[X\cond A_1=S]\geq a(S)$ if $a(S)\geq z(S)$, and $\E[X\cond A_1=S\cup\{\imax\}]\geq a(S)$. Note here that the expected value of $X$ only increases if there is an edge from $\imax$ to a vertex for which $a(S)$ is attained.
Since the events where $A_1=S$ and $A_1=S\cup\{\imax\}$ occur with equal probability,
\begin{align*}
	\E[X\cond A_1\setminus\{\imax\}=S] &\geq \frac{\chrb{z(S)>a(S)} \, \Delta + \bigl(1-\chrb{z(S)>a(S)}\bigr)\,a(S)}{2} + \frac{a(S)}{2} \\
	&= a(S) + \frac{1}{2} \,\chrb{z(S) > a(S)} \bigl(\Delta - a(S)\bigr),
\end{align*}
where $\chr{}$ denotes the indicator function on binary events, \ie $\chr{E}=1$ if event $E$ takes place and $\chr{E}=0$ otherwise.
Depending on the value of $z(S)$, the right-hand side is minimized either for $a(S)=0$ or for $a(S)=z(S)$, and it becomes equal to $\chr{z(S)>0}\cdot\frac{\Delta}{2}$ when $a(S)=0$ and equal to
$z(S)$ when $a(S)=z(S)$. In summary,
\begin{equation}  \label{eq:expect}
	\E[X\cond S\setminus\{\imax\}=S]\geq\min\biggl\{\chrb{z(S)>0} \cdot \frac{\Delta}{2}, z(S)\biggr\} = \min
\biggl\{\frac{\Delta}{2}, z(S)\biggr\}.
\end{equation}
We can now lift the assumption that $S\neq\emptyset$ and $S\neq N\setminus\{\imax\}$. If $S=\emptyset$, then $z(S)=0$
and~\eqref{eq:expect} holds trivially. If $S=N\setminus\{\imax\}$, then $z(S)=\Delta$, and $i^*$ is in $N\setminus S$ and
therefore chosen by the $2$-partition mechanism with probability $1/2$. Thus $\E[X\cond A_1 \setminus\{\imax\}=S]\geq
\Delta/2=\min\{\Delta/2, z(S)\}$, and~\eqref{eq:expect} is again satisfied.

By construction of the $2$-partition mechanism, each vertex belongs to $A_1$ with probability $1/2$, so $z(S) =
\deg{A_1}{i^*}$ is distributed according to the binomial distribution with $\Delta$ trials and success probability $1/2$. We
thus have that
\begin{align*}
	\E[X] &= \sum_{S \subseteq N} \P[A_1 \setminus \{\imax\} = S] \cdot \E[X \cond A_1 \setminus \{\imax\} = S] \\
		&\geq \sum_{k=0}^{\Delta} \sum_{\substack{S\subseteq N\\z(S) = k}} \P[A_1 \setminus \{\imax\} = S] \cdot \min\biggl\{\frac{\Delta}{2},k\biggr\} \\
		&= \frac{1}{2^{\Delta}} \sum_{k=0}^{\Delta} \binom{\Delta}{k} \cdot \min\biggl\{\frac{\Delta}{2},k\biggr\}.
\end{align*}
This finally implies that 
\begin{equation*}
	\alpha_2(\Delta)\geq \frac{1}{\Delta 2^{\Delta}} \sum_{k=0}^{\Delta} \binom{\Delta}{k} \cdot
	\min\biggl\{\frac{\Delta}{2},k\biggr\}
\end{equation*}
as claimed.
\end{proof}

We now use \lemref{lem:alpha_2} to derive a closed form expression for $\alpha_2(\Delta)$.
\begin{theorem}  \label{thm:alpha_2}
	On any graph $G$ with maximum degree $\Delta=\Delta(G)$, the $2$-partition mechanism is $\alpha_2(\Delta)$-optimal, where
\begin{align*}
	\alpha_2(\Delta) = \begin{cases}
		\frac{1}{4} & \text{if $\Delta=1$,} \\  
		\frac{1}{2} - \frac{1}{2^{\Delta+2}} \binom{\Delta}{\Delta/2} &\text{if $\Delta\geq 2$ and even}, \\
		\alpha_2(\Delta-1) & \text{if $\Delta\geq 3$ and odd.}
	\end{cases}
\end{align*}
\end{theorem}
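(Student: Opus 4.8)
The plan is to start from the sum established in \lemref{lem:alpha_2}, writing $\alpha_2(\Delta) = \frac{1}{\Delta 2^{\Delta}} S(\Delta)$ with $S(\Delta) = \sum_{k=0}^{\Delta} \binom{\Delta}{k}\min\{\Delta/2,k\}$, and to evaluate $S(\Delta)$ in closed form by splitting the sum at the threshold $k=\Delta/2$. Below the threshold the summand is $k\binom{\Delta}{k}$ and above it is $\frac{\Delta}{2}\binom{\Delta}{k}$, so it suffices to evaluate a partial sum of binomial coefficients and a partial sum of their degree-weighted versions. The case $\Delta=1$ is a one-line direct computation: $S(1)=\binom{1}{0}\cdot 0 + \binom{1}{1}\cdot\frac12=\frac12$, giving $\alpha_2(1)=\frac14$.

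The main tool is the absorption identity $k\binom{\Delta}{k}=\Delta\binom{\Delta-1}{k-1}$, which turns the lower partial sum $\sum_{k\le \Delta/2} k\binom{\Delta}{k}$ into $\Delta$ times a partial sum $\sum_{j}\binom{\Delta-1}{j}$ of binomial coefficients of order $\Delta-1$. Together with the symmetry $\binom{\Delta}{k}=\binom{\Delta}{\Delta-k}$ and the normalization $\sum_{k}\binom{\Delta}{k}=2^{\Delta}$, each partial sum evaluates cleanly; the only care needed is the parity of $\Delta$ (equivalently of $\Delta-1$), which controls whether a partial sum is exactly half of a full power of two or half with a central-binomial correction.

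For even $\Delta$, write $m=\Delta/2$. The upper tail is $\sum_{k>m}\binom{\Delta}{k}=\frac12\bigl(2^{\Delta}-\binom{\Delta}{m}\bigr)$, while the absorption identity gives $\sum_{k\le m}k\binom{\Delta}{k}=\Delta\sum_{j\le m-1}\binom{\Delta-1}{j}=\Delta\,2^{\Delta-2}$, since $\Delta-1$ is odd and the partial sum of an odd-order row is exactly half of $2^{\Delta-1}$. Assembling yields $S(\Delta)=\Delta 2^{\Delta-1}-\frac{\Delta}{4}\binom{\Delta}{\Delta/2}$, hence $\alpha_2(\Delta)=\frac12-\frac{1}{2^{\Delta+2}}\binom{\Delta}{\Delta/2}$. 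For odd $\Delta$ the threshold falls strictly between two integers, so the sum splits evenly and $\sum_{k>(\Delta-1)/2}\binom{\Delta}{k}=2^{\Delta-1}$, whereas the absorption identity now produces a partial sum of the \emph{even}-order row $\binom{\Delta-1}{j}$ and therefore a central-binomial correction $\binom{\Delta-1}{(\Delta-1)/2}$. This gives $\alpha_2(\Delta)=\frac12-\frac{1}{2^{\Delta+1}}\binom{\Delta-1}{(\Delta-1)/2}$, and substituting $\Delta-1$ into the already-established even formula shows $\alpha_2(\Delta-1)=\frac12-\frac{1}{2^{(\Delta-1)+2}}\binom{\Delta-1}{(\Delta-1)/2}$, which is the same expression; thus $\alpha_2(\Delta)=\alpha_2(\Delta-1)$.

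The step I expect to be most delicate is the parity bookkeeping of the boundary index: ensuring the term $k=\Delta/2$ (present only when $\Delta$ is even) is assigned consistently to the lower branch of the minimum, and tracking exactly where the central-binomial correction appears after the absorption identity shifts the order from $\Delta$ to $\Delta-1$. Everything else is routine manipulation of the identities $k\binom{\Delta}{k}=\Delta\binom{\Delta-1}{k-1}$, $\binom{\Delta}{k}=\binom{\Delta}{\Delta-k}$, and $\sum_{k}\binom{\Delta}{k}=2^{\Delta}$.
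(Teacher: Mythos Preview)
Your proposal is correct and follows essentially the same route as the paper's proof: both start from \lemref{lem:alpha_2}, split the sum at $\Delta/2$, use the absorption identity $k\binom{\Delta}{k}=\Delta\binom{\Delta-1}{k-1}$ to reduce the lower part to a partial row-sum of order $\Delta-1$, and exploit binomial symmetry together with the parity of $\Delta-1$ to evaluate the resulting partial sums. Your presentation is somewhat more compact---you name the absorption identity up front and compute $S(\Delta)$ directly---whereas the paper carries the normalization $1/(\Delta 2^{\Delta})$ throughout and arrives at the even-$\Delta$ formula via an extra step using $\binom{\Delta}{\Delta/2}=2\binom{\Delta-1}{\Delta/2-1}$, but the underlying argument is the same.
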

\begin{proof}
	Using \lemref{lem:alpha_2} it is straightforward to calculate that $\alpha_2(1)=1/4$.

	Next assume that $\Delta$ is strictly positive and even. Then, by \lemref{lem:alpha_2},
\begin{align*}
	\alpha_2(\Delta) &= \frac{1}{\Delta 2^{\Delta}}  \sum_{k=0}^{\frac{\Delta}{2}-1} \binom{\Delta}{k} \cdot k + \frac{1}{\Delta
2^{\Delta}} \binom{\Delta}{\Delta/2} \cdot \frac{\Delta}{2} + \frac{1}{\Delta 2^{\Delta}} \; \sum_{\mathrlap{k=\frac{\Delta}{2}+1}}^{\Delta} \; \binom{\Delta}{k} \cdot \frac{\Delta}{2}
\end{align*}
By symmetry of the binomial distribution with success probability $1/2$,
\begin{equation*}
	\frac{1}{2^{\Delta}} \sum_{k=0}^{\frac{\Delta}{2}-1} \binom{\Delta}{k} = \frac{1}{2^{\Delta}} \;
\sum_{\mathrlap{k=\frac{\Delta}{2}+1}}^{\Delta} \; \binom{\Delta}{k}
\end{equation*}
and thus
\begin{equation*}
	\frac{1}{2} \cdot \frac{1}{2^{\Delta}}
	\binom{\Delta}{\Delta/2} + \frac{1}{2^{\Delta}} \sum_{k=0}^{\frac{\Delta}{2}-1} \binom{\Delta}{k} = 
	\frac{1}{2} \cdot \frac{1}{2^{\Delta}} \binom{\Delta}{\Delta/2} + \frac{1}{2^{\Delta}}\; \sum_{\mathrlap{k=\frac{\Delta}{2}+1}}^{\Delta}\; \binom{\Delta}{k} =
	\frac{1}{2} .
\end{equation*}
Using the latter, we obtain
\begin{align*}
	\alpha_2(\Delta) &= \frac{1}{\Delta 2^{\Delta}} \sum_{k=0}^{\frac{\Delta}{2}-1} \binom{\Delta}{k} \cdot k + \frac{1}{2} \cdot
\frac{1}{\Delta 2^{\Delta}} \binom{\Delta}{\Delta/2} \cdot \frac{\Delta}{2} + \frac{1}{4} \\
	&= \frac{1}{2^{\Delta}} \sum_{k=1}^{\frac{\Delta}{2}-1} \frac{(\Delta-1)!}{(\Delta-k)!(k-1)!} + \frac{1}{2^{\Delta+2}} \binom{\Delta}{\Delta/2} + \frac{1}{4} \\
	&= \frac{1}{2^{\Delta}} \sum_{k=1}^{\frac{\Delta}{2}-1} \binom{\Delta-1}{k-1} + \frac{1}{2^{\Delta+2}} \binom{\Delta}{\Delta/2} + \frac{1}{4} \\
	&= \frac{1}{2} \cdot \frac{1}{2^{\Delta-1}} \sum_{k=0}^{\frac{\Delta}{2}-2} \binom{\Delta-1}{k} + \frac{1}{2^{\Delta+2}} \binom{\Delta}{\Delta/2} + \frac{1}{4} .
\end{align*}
Also by symmetry of the binomial distribution,
\begin{equation*}
	\frac{1}{2^{\Delta-1}} \sum_{k=0}^{\frac{\Delta}{2}-1} \binom{\Delta-1}{k} = \frac{1}{2} ,
\end{equation*}
and we obtain 
\begin{align*}
	\alpha_2(\Delta) &= \frac{1}{4} + \frac{1}{2} \Biggl(\frac{1}{2} - \frac{1}{2^{\Delta-1}}\binom{\Delta-1}{\Delta/2-1}\Biggr)
+ \frac{1}{2^{\Delta+2}} \binom{\Delta}{\Delta/2} \\
	&= \frac{1}{2} - \frac{1}{2^{\Delta}} \binom{\Delta-1}{\Delta/2-1} + \frac{1}{2^{\Delta+2}} \binom{\Delta}{\Delta/2}.
\end{align*}
Since $\Delta-1$ is odd, $\binom{\Delta-1}{\Delta/2}=\binom{\Delta-1}{\Delta/2-1}$ and thus
$\binom{\Delta}{\Delta/2}=\binom{\Delta-1}{\Delta/2}+\binom{\Delta-1}{\Delta/2-1}=2\binom{\Delta-1}{\Delta/2-1}$. We
conclude that
\begin{align*}
	\alpha_2(\Delta) &= \frac{1}{2} - \frac{1}{2^{\Delta+1}} \binom{\Delta}{\Delta/2} + \frac{1}{2^{\Delta+2}}
\binom{\Delta}{\Delta/2} \\
	&= \frac{1}{2} - \frac{1}{2^{\Delta+2}} \binom{\Delta}{\Delta/2}
\end{align*}
as claimed.

Finally assume that $\Delta\geq 3$ and odd. Then, by \lemref{lem:alpha_2},
\begin{align*}
	\alpha_2(\Delta) &= \frac{1}{\Delta 2^{\Delta}}  \sum_{k=0}^{\left\lfloor \frac{\Delta}{2} \right\rfloor} \binom{\Delta}{k} \cdot k +
\frac{1}{\Delta 2^{\Delta}} \; \sum_{\mathrlap{k=\left\lceil
\frac{\Delta}{2} \right\rceil}}^{\Delta} \; \binom{\Delta}{k} \cdot \frac{\Delta}{2} .
\end{align*}
By symmetry of the binomial distribution with success probability $1/2$,
\begin{equation*}
	\frac{1}{2^{\Delta}} \; \sum_{\mathrlap{k=\left\lceil \frac{\Delta}{2} \right\rceil}}^{\Delta} \; \binom{\Delta}{k} = \frac{1}{2} ,
\end{equation*}
and we obtain
\begin{align*}	
\alpha_2(\Delta) &= \frac{1}{\Delta 2^{\Delta}}  \sum_{k=0}^{\frac{\Delta-1}{2}} \binom{\Delta}{k} \cdot k + \frac{1}{4} \\
&= \frac{1}{2^{\Delta}} \sum_{k=1}^{\frac{\Delta-1}{2}} \frac{(\Delta-1)!}{(\Delta-k)!(k-1)!} + \frac{1}{4} \\
&= \frac{1}{2^{\Delta}} \sum_{k=1}^{\frac{\Delta-1}{2}}\binom{\Delta-1}{k-1} + \frac{1}{4} \\
&= \frac{1}{2} \cdot \frac{1}{2^{\Delta-1}} \; \sum_{k=0}^{\mathclap{\frac{\Delta-1}{2}-1}} \binom{\Delta-1}{k} + \frac{1}{4}.
\end{align*}
Since $\Delta-1$ is even, and again by symmetry of the binomial distribution, \begin{equation*}
	\frac{1}{2^{\Delta-1}} \; \sum_{k=0}^{\mathclap{\frac{\Delta-1}{2}-1}} \binom{\Delta-1}{k}  + \frac{1}{2} \cdot \frac{1}{2^{\Delta-1}} \binom{\Delta-1}{(\Delta-1)/2} = \frac{1}{2} .
\end{equation*}
We conclude that
\begin{equation*}
	\alpha_2(\Delta) 
	= \frac{1}{4} + \frac{1}{2} \Biggl(\frac{1}{2} -
\frac{1}{2^{\Delta}}\binom{\Delta-1}{(\Delta-1)/2}\Biggr) 
	= \alpha_2(\Delta-1)
\end{equation*}
as claimed.
\end{proof}

Given this closed-form expression, it is not difficult to show that $\alpha_2(\Delta)$ is non-decreasing in $\Delta$.
\begin{corollary}
	For every $\Delta\in\N$, $\alpha_2(\Delta+1)\geq \alpha_2(\Delta)$ and $\alpha_2(\Delta+2)>\alpha_2(\Delta)$.
\end{corollary}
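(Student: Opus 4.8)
The plan is to work directly from the closed-form expression in \thmref{thm:alpha_2} and to exploit its parity structure. First I would record how the sequence behaves on odd and even arguments. For odd $\Delta\geq 3$ the theorem gives $\alpha_2(\Delta)=\alpha_2(\Delta-1)$, so each odd value except $\alpha_2(1)$ simply copies the preceding even value; the sequence is thus a staircase that is constant from an even index to the next odd index and changes only when passing from an odd index to the next even index. Consequently both assertions reduce to a single monotonicity claim about the even-indexed subsequence, together with the trivial base comparison $\alpha_2(2)=3/8>1/4=\alpha_2(1)$.

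Concretely, I would set $\Delta=2n$ in the even case and rewrite
\[
\alpha_2(2n)=\frac{1}{2}-\frac{1}{2^{2n+2}}\binom{2n}{n}=\frac{1}{2}-\frac{1}{4}\cdot\frac{1}{4^{n}}\binom{2n}{n},
\]
so that the behavior of $\alpha_2$ on even arguments is governed entirely by the central-binomial probability $\beta(n)=\binom{2n}{n}/4^{n}$. The key step is then to show $\beta(n+1)<\beta(n)$ for all $n\geq 1$, which follows from the elementary ratio
\[
\frac{\beta(n+1)}{\beta(n)}=\frac{1}{4}\cdot\frac{\binom{2n+2}{n+1}}{\binom{2n}{n}}=\frac{1}{4}\cdot\frac{(2n+2)(2n+1)}{(n+1)^{2}}=\frac{2n+1}{2n+2}<1.
\]
Hence $\beta$ is strictly decreasing and $\alpha_2(2n)$ is strictly increasing in $n$, \ie $\alpha_2(\Delta+2)>\alpha_2(\Delta)$ whenever $\Delta$ is even.

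With this in hand I would finish by a short case analysis on the parity of $\Delta$. For $\alpha_2(\Delta+1)\geq\alpha_2(\Delta)$: if $\Delta$ is even then $\Delta+1$ is odd and $\alpha_2(\Delta+1)=\alpha_2(\Delta)$ by the theorem; if $\Delta$ is odd then $\Delta+1$ is even, and writing $\alpha_2(\Delta)=\alpha_2(\Delta-1)$ for $\Delta\geq 3$ or $\alpha_2(\Delta)=1/4$ for $\Delta=1$ reduces the claim to $\alpha_2(\Delta+1)>\alpha_2(\Delta-1)$ or to $\alpha_2(2)>\alpha_2(1)$, both of which hold by the even-index monotonicity and the base comparison. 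For $\alpha_2(\Delta+2)>\alpha_2(\Delta)$: when $\Delta$ is even this is the even-index monotonicity directly, and when $\Delta$ is odd one has $\alpha_2(\Delta+2)=\alpha_2(\Delta+1)$ and $\alpha_2(\Delta)=\alpha_2(\Delta-1)$ (or $\alpha_2(1)=1/4$), again reducing to a single even-index step or to the base comparison. I do not expect a genuine obstacle here: the argument is a routine computation once the ratio $\beta(n+1)/\beta(n)$ is identified, and the only point requiring care is the parity bookkeeping together with treating $\Delta=1$ separately, since the odd-case formula in \thmref{thm:alpha_2} is valid only for $\Delta\geq 3$.
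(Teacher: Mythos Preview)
Your proposal is correct and follows essentially the same approach as the paper: both use the closed form from \thmref{thm:alpha_2}, exploit the parity structure to reduce everything to strict monotonicity of the even-indexed subsequence, and then verify that $\binom{2n}{n}/4^n$ is strictly decreasing. The only difference is in this last step---the paper expands $\binom{\Delta}{\Delta/2}$ via Pascal's rule and bounds it by $4\binom{\Delta-2}{\Delta/2-1}$, whereas your direct ratio computation $\beta(n+1)/\beta(n)=(2n+1)/(2n+2)$ is slightly cleaner; you are also more explicit about the base case $\alpha_2(2)>\alpha_2(1)$, which the paper glosses over.
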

\begin{proof}
	Since $\alpha_2(\Delta)=\alpha_2(\Delta-1)$ for odd $\Delta\geq 3$ by \thmref{thm:alpha_2}, it suffices to show that $\alpha_2(\Delta)>\alpha_2(\Delta-2)$ for even $\Delta\geq 4$. To this end, note that
\begin{align*}
	\alpha_2(\Delta) &= \frac{1}{2} - \frac{1}{2^{\Delta+2}} \binom{\Delta}{\Delta/2} .
\end{align*}
Using three times that $\binom{\Delta}{k}=\binom{\Delta-1}{k-1}+\binom{\Delta-1}{k}$, we obtain
\begin{align*}
	\alpha_2(\Delta) &= \frac{1}{2} - \frac{1}{2^{\Delta+2}} \Biggl(\binom{\Delta-2}{\Delta/2-2} + 2\binom{\Delta-2}{\Delta/2-1}
+ \binom{\Delta-2}{\Delta/2}\Biggr) ,
\end{align*}
and since $\binom{\Delta-2}{k}$ is maximized for $k=\Delta/2-1$, 
\begin{align*}
	\alpha_2(\Delta) &> \frac{1}{2} - \frac{1}{2^{\Delta}} \binom{\Delta-2}{\Delta/2-1} = \alpha_2(\Delta-2). \tag*{\raisebox{-.5\baselineskip}{\qedhere}}
\end{align*}
\end{proof}

This result implies that a graph with a single edge is in fact the unique worst case for the $2$-partition mechanism, and it also yields the first non-trivial lower bound for settings without abstentions. In the absence of abstentions, one of two conditions is always satisfied: either every vertex has indegree exactly one, in which case every mechanism including $2$-partition is optimal, or $\Delta\geq 2$ and $2$-partition is at least $\alpha_2(\Delta)$-optimal. We will return to this special case, and show a better bound, in \secref{sec:no_abstentions}.

\section{The $k$-Partition Mechanism}

What perhaps is most interesting about the above analysis of the $2$-partition mechanism is that the technique we have used to analyse it can in principle also be applied to a partition of the vertices into more than two sets. Indeed, in this section, we propose a generalization of the $2$-partition mechanism to a larger number of sets and then generalize the analysis technique to the new mechanism.

For a fixed $k\geq 2$, the new mechanism first assigns each vertex $i\in N$ independently and uniformly at random to one of $k$ sets $A_1,\dots,A_k$, such that $\bigcup_{i=1,\dots,k}A_i=N$, $A_i\cap A_j=\emptyset$ for all $i,j\in\{1,\dots,k\}$ with $i\neq j$, and $\P[i\in A_i]=1/k$ for all $i\in
\{1,\dots,k\}$. The mechanism then proceeds in $k-1$ iterations numbered from $2$ to $k$, during which it maintains and updates a candidate vertex that is finally selected after iteration $k$. The candidate is updated if the maximum indegree among vertices in $A_j$ from vertices in $A_{<j}=\bigcup_{i=1}^{j-1}A_i$ other than the candidate is at least that of the candidate at the time it became the candidate. In that case, the new candidate is chosen uniformly at random from the set of vertices in $A_j$ with maximum indegree from vertices in $A_{<j}=\bigcup_{i=1}^{j-1}A_i$, now including the previous candidate. The mechanism is clearly impartial, because it only takes into account the outgoing edges of vertices that can no longer be selected. The fact the any outgoing edges of the previous candidate are taken into account when selecting the new candidate is somewhat subtle, but it turns out to be crucial for our results.
A formal description of the mechanism is given in \figref{alg:k_partition}.
\begin{algorithm}[tb]
	\KwIn{Graph $G = (N,E)$}
	\KwOut{Vertex $i \in N$}
	Assign each $i\in N$ independently and uniformly at random to one of $k$ sets $A_1,\dots,A_k$\; \label{line:k_partition}
	Choose $\imax\in A_1$ arbitrarily; set $\dmax:=0$\;
	\For{$j=2,\dots,k$}{
    \If{$\max_{i\in A_j} \deg{A_{<j}\setminus\{\imax\}}{i}\geq \dmax$}{
        Choose $\imax\in\arg\max_{i'\in A_j}\deg{A_{<j}}{i'}$ uniformly at random; set $\dmax:=\deg{A_{<j}}{\imax}$\; \label{line:choose}
    }
	}
	Return $\imax$\;
\caption{The $k$-partition mechanism}
\label{alg:k_partition}
\end{algorithm}

Now consider a graph $G=(N,E)\in\G$ and a vertex $\imax\in N$ with degree $\Delta=\Delta(G)$. Fix $k\in\N$, and let $X$ be the degree of the vertex selected from $G$ by the $k$-partition mechanism. Note that $X$ is a random variable subject to the internal randomness of the mechanism, and that we are interested in its expected value $\E[X]$.

We need some notation. For a subset $N'\subseteq N$ of the vertices, let $\Part_k(N')$ denote the set of all partitions
$\S=(S_1,\dots,S_k)$ of $N'$ into $k$ (possibly empty) sets $S_1,\dots,S_k$, \ie $\Part_k(N')=\{(S_1,\dots,S_k)\midd S_j
\subseteq N' \text{ for } j=1,\dots,k,\, \bigcup_{j=1}^k S_j=N',\, S_i\cap S_j=\emptyset \text{ for } i,j=1,\dots,k \text{
with } i\neq j\}$. For a partition $\S=(S_1,\dots,S_k)$ and $j\in\{1,\dots,k\}$, let $S_{<j} = \bigcup_{i=1}^{j-1}S_i$. 
For a partition $\S=(S_1,\dots,S_k)\in\Part(N)$ and $i\in N$, we slightly abuse notation and write
$\S\setminus\{i\}=(S_1\setminus\{i\},\dots,S_k\setminus\{i\})$ for the partition obtained from $\S$ by removing $i$.

Let $\A$ be the partition chosen in \lineref{line:k_partition} of the $k$-Partition mechanism in \figref{alg:k_partition}. The following lemma bounds the expected value of $X$ given that $\A=\S$ for some given partition $\S\in \Part_k(N)$.
\begin{lemma}  \label{lem:fixed_partition}
	Consider a graph $G=(N,E)$ and a vertex $\imax$ with degree $\Delta=\Delta(G)$. Let $\S=(S_1,\dots,S_k)\in\Part_k(N)$, and let $j^*\in\{1,\dots,k\}$ such that $\imax\in S_{j^*}$. Then,
\[
	\E[X \cond \A = \S] \geq a + \chr{z > a}\cdot \bigl(\Delta-a\bigr) ,
\]
where $a=\max_{j=2,\dots,k}\max_{i\in S_j\setminus\{\imax\}} \deg{S_{<j}}{i}$ and $z=\deg{S_{<j^*}}{\imax}$.
\end{lemma}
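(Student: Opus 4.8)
The plan is to condition on the fixed partition $\A=\S$, so that the only remaining randomness lies in the tie-breaking of \lineref{line:choose}, and then to track the evolution of the mechanism through iterations $j=2,\dots,k$. To avoid clashing with the fixed vertex $\imax$ of the lemma, I write $c_j$ and $d_j$ for the current candidate and the running value of $\dmax$ after iteration $j$ (with $d_1=0$ and $c_1$ the initial vertex of $S_1$). The structural fact I would establish first is that $\dmax$ is \emph{non-decreasing}: the candidate is refreshed in iteration $j$ exactly when $\max_{i\in S_j}\deg{S_{<j}\setminus\{c_{j-1}\}}{i}\geq d_{j-1}$, and then $d_j=\max_{i\in S_j}\deg{S_{<j}}{i}\geq\max_{i\in S_j}\deg{S_{<j}\setminus\{c_{j-1}\}}{i}\geq d_{j-1}$, since dropping $c_{j-1}$ from the source set can only decrease an indegree. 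I would also record the trivial bound $X\geq d_k$: if the final candidate $c_k$ was last set in iteration $j$, then $X=\deg{}{c_k}\geq\deg{S_{<j}}{c_k}=d_k$ because $S_{<j}\subseteq N$.

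With these two facts I would prove that \textbf{$X\geq a$ holds with probability one}, which already settles the case $z\leq a$ (where $\chr{z>a}=0$). Let $j_0$ and $i_0\in S_{j_0}\setminus\{\imax\}$ attain the maximum defining $a$, so $\deg{S_{<j_0}}{i_0}=a$ (if no such pair exists then $a=0$ and the bound is vacuous). If $d_{j_0-1}\geq a$, monotonicity gives $d_k\geq a$. Otherwise $d_{j_0-1}\leq a-1$, and since removing the single vertex $c_{j_0-1}$ destroys at most one edge into $i_0$ we get $\deg{S_{<j_0}\setminus\{c_{j_0-1}\}}{i_0}\geq a-1\geq d_{j_0-1}$; hence the update fires in iteration $j_0$ and sets $d_{j_0}\geq\deg{S_{<j_0}}{i_0}=a$, so again $d_k\geq a$ by monotonicity. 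In both cases $X\geq d_k\geq a$.

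For the case $z>a$ I would prove the stronger statement that $\imax$ itself is selected with probability one, giving $X=\Delta$ and hence equality in $a+(\Delta-a)$. The three observations are: (i) every candidate seen before iteration $j^*$ lies in some $S_{j'}$ with $2\leq j'<j^*$ (or is the initial $S_1$ vertex) and thus differs from $\imax$, so $d_{j^*-1}\leq a<z$ by the definition of $a$; (ii) at iteration $j^*$ we have $\deg{S_{<j^*}\setminus\{c_{j^*-1}\}}{\imax}\geq z-1\geq d_{j^*-1}$, so the update fires, and since every $i\in S_{j^*}\setminus\{\imax\}$ satisfies $\deg{S_{<j^*}}{i}\leq a<z=\deg{S_{<j^*}}{\imax}$, the vertex $\imax$ is the unique maximizer and becomes the candidate with $d_{j^*}=z$; and (iii) for every later $j>j^*$ the candidate is $\imax$ and each $i\in S_j$ has $\deg{S_{<j}\setminus\{\imax\}}{i}\leq\deg{S_{<j}}{i}\leq a<z=d_{j^*}$, so no further update fires and $\imax$ survives to the end. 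Combining the two cases yields exactly $\E[X\cond\A=\S]\geq a+\chr{z>a}(\Delta-a)$.

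The hard part will be handling the deliberate asymmetry of \lineref{line:choose}: the current candidate is excluded from the source set in the test $\deg{S_{<j}\setminus\{\imax\}}{i}\geq\dmax$ but included when the new value $\dmax:=\deg{S_{<j}}{\imax}$ is computed. Getting the resulting off-by-one estimates right—the bounds $\geq a-1$ and $\geq z-1$ together with the integrality steps that pass them to $\geq d_{j_0-1}$ and $\geq d_{j^*-1}$—is precisely the subtlety flagged in the text, and it is exactly what makes both the monotonicity of $\dmax$ and the survival of $\imax$ go through cleanly.
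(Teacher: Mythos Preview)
Your proof is correct and follows essentially the same route as the paper's. Both arguments fix the partition, track the deterministic evolution of $\dmax$, exploit the integrality/off-by-one observation to handle the asymmetric exclusion of the current candidate, and conclude that $X\geq a$ always and $X=\Delta$ when $z>a$. The only cosmetic difference is that the paper proves the exact identity $\dmax(j)=\max_{m=2,\dots,j}\max_{i\in S_m}\deg{S_{<m}}{i}$ by induction and reads off both conclusions from it, whereas you prove monotonicity of $d_j$ and then argue directly at the two distinguished iterations $j_0$ and $j^*$; the underlying mechanics are identical.
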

\begin{proof}
	For $j=2,\dots,k$, let $\imax(j)$ and $\dmax(j)$ denote the values of $\imax$ and $\dmax$ after iteration~$j$ of the mechanism. We show by induction on $j$ that for all $j=2,\dots,k$, $\dmax(j)=\max_{m=2,\dots,j}\max_{i\in S_m}\deg{S_{<m}}{i}$.

	First consider the case where $j=2$. If $S_2=\emptyset$, there is nothing to show. Otherwise, the mechanism chooses a vertex $\imax(2)$ with $\deg{S_{<2}}{\imax(2)}=\deg{S_{1}}{\imax(2)}=\max_{i\in S_2}\deg{S_{1}}{i}$, and thus
\begin{equation*}
	\dmax(2) = \deg{S_1}{\imax(2)} = \max_{i\in S_2} \deg{S_1}{i} .
\end{equation*}

Now suppose that $\dmax(j-1)=\max_{m=2,\dots,j-1}\max_{i\in S_m}\deg{S_{<m}}{i}$ for some $j\in\{3,\dots,k\}$. If $S_j=\emptyset$, there again is nothing to show. Otherwise we consider iteration $j$ of the mechanism and distinguish two cases.

If $\max_{i\in S_j}\deg{S_{<j}\setminus\{\imax(j-1)\}}{i}\geq\dmax(j-1)$, then $\imax(j)\in\arg\max_{i\in S_j}\deg{S_{<j}}{i}$ and
\begin{equation*}
	\dmax(j) = \deg{S_{<j}}{\imax(j)} = \max_{i\in S_j}\deg{S_{<j}}{i} .
\end{equation*}
Furthermore,
\begin{equation*}
	\dmax(j) = \degb{S_{<j}}{\imax(j)} \geq \degb{S_{<j}\setminus \{\imax(j-1)\}}{\imax(j)}\geq\dmax(j-1) =	\max\nolimits_{m=2,\dots,j-1}\max\nolimits_{i\in S_m}\deg{S_{<m}}{i} ,
\end{equation*}
where the last equality holds by the induction hypothesis. In summary,
\begin{equation*}
	\dmax(j) = \max\nolimits_{m=2,\dots,j}\max\nolimits_{i\in S_m}\deg{S_{<m}}{i}.
\end{equation*}

If, on the other hand, $\max_{i\in S_j} \deg{S_{<j}\setminus\{\imax(j-1)\}}{i} < \dmax(j-1)$, then $\imax(j)=\imax(j-1)$ and 
\begin{equation*}
	\dmax(j) = \dmax(j-1) 
	\geq \max_{i\in S_j} \deg{S_{<j}\setminus\{\imax(j-1)\}}{i} + 1
	\geq \max_{i\in S_j} \deg{S_{<j}}{i} ,
\end{equation*}
where the first inequality holds because degrees are integral, and the second inequality because there can be at most one edge from $\imax(j-1)$ to $\imax(j)$.
Furthermore,
\begin{equation*}
	\dmax(j) = \dmax(j-1) = \max\nolimits_{m=2,\dots,j-1}\max\nolimits_{i\in S_m} \deg{S_{<m}}{i} 
\end{equation*}
where the second equality holds by the induction hypothesis, so again
\begin{equation*}
	\dmax(j) = \max\nolimits_{m=2,\dots,j}\max\nolimits_{i\in S_m} \deg{S_{<m}}{i} .
\end{equation*}

Since the mechanism returns $\imax(k)$,
\begin{align*}
	\E[X\cond \A=\S] = \deg{}{\imax(k)} = \dmax(k) 
	&= \max\nolimits_{j=2,\dots,k} \max\nolimits_{i\in S_j} \deg{S_{<j}}{i} \\
	&\geq \max\nolimits_{j=2,\dots,k} \max\nolimits_{i\in S_j} \deg{S_{<j} \setminus \{\imax\}}{i} = a .
\end{align*}
To complete the proof, assume that
\begin{equation*}
	z = \deg{S_{<j^*}}{\imax} > 
	\max\nolimits_{j=2,\dots,k} \max\nolimits_{i\in S_j\setminus\{\imax\}} \deg{S_{<j}}{i} = a .
\end{equation*}
Then
\begin{equation*}
	\deg{S_{<j^*}}{i^*} > \max\nolimits_{m=2,\dots,j^*-1} \max\nolimits_{i\in S_m}\deg{S_{<m}}{i} = \dmax(j^*-1)
\end{equation*}
and
\begin{equation*}
	\deg{S_{<j^*}}{i^*} > \max\nolimits_{m=j^*+1,\dots,k} \max\nolimits_{i \in S_m} \deg{S_{<m}}{i} ,
\end{equation*}
so $\imax(j)=\imax$ for $j=j^*,\dots,k$. The mechanism thus selects $\imax$, and $\E[X \cond \A=\S, z>a]=\Delta$.
\end{proof}

As in our analysis of the $2$-partition mechanism, we now proceed to bound the expected value of~$X$ given that a partition is fixed for all vertices except $\imax$, and $\imax$ is then allocated uniformly at random to one of the $k$ sets.
\begin{lemma}  \label{lem:fixed_super_partition}
	Consider a graph $G=(N,E)$ and a vertex $\imax$ with degree $\Delta=\Delta(G)$. Let $\S=(S_1,\dots,S_k)\in\Part_k(N\setminus\{\imax\})$. For $j=1,\dots,k$, let $z_j=\deg{S_{<j}}{\imax}$. Then,
\[
	\E[X \cond \A\setminus\{\imax\}=\S] \geq \min\nolimits_{j=1,\dots,k} \biggl\{ z_j + \frac{k-j}{k} (\Delta-z_j) \biggr\} .
\]
\end{lemma}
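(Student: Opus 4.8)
The plan is to condition on the slot into which $\imax$ is placed. Given $\A\setminus\{\imax\}=\S$, the mechanism assigns $\imax$ independently and uniformly, so for each $j\in\{1,\dots,k\}$ the full partition $\A$ equals $\S^{(j)}$ with probability $1/k$, where $\S^{(j)}\in\Part_k(N)$ denotes the partition obtained from $\S$ by adding $\imax$ to $S_j$. Hence
\[
	\E[X\cond\A\setminus\{\imax\}=\S]=\frac{1}{k}\sum_{j=1}^{k}\E[X\cond\A=\S^{(j)}].
\]
To each term I would apply \lemref{lem:fixed_partition} to the partition $\S^{(j)}$. Writing $a_j$ for the quantity ``$a$'' produced by that lemma for $\S^{(j)}$, and observing that the corresponding ``$z$'' equals $\deg{S_{<j}}{\imax}=z_j$ because $\imax\notin(\S^{(j)})_{<j}$, this gives
\[
	\E[X\cond\A=\S^{(j)}]\geq a_j+\chr{z_j>a_j}\,(\Delta-a_j),
\]
which equals $\Delta$ when $z_j>a_j$ and equals $a_j$ otherwise.

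The crux is two monotonicities in $j$. First, $z_j=\deg{S_{<j}}{\imax}$ is non-decreasing in $j$ with $z_1=0$, since $S_{<j}$ grows with $j$. Second, $a_j$ is non-increasing in $j$: for a fixed competitor set $S_\ell$, its best score $\max_{i\in S_\ell}\deg{(\S^{(j)})_{<\ell}}{i}$ counts $\imax$ as a source exactly when $\imax$ sits in an earlier slot, i.e.\ when $j<\ell$, so this score can only shrink as $j$ grows, and taking the maximum over $\ell$ preserves the monotonicity. Consequently $z_j-a_j$ is non-decreasing, so the set $J=\{j:z_j>a_j\}$ of slots in which $\imax$ is selected is an upper interval $\{j_0,\dots,k\}$; moreover $j_0\geq 2$, since $z_1=0\leq a_1$ rules out selection in slot $1$ (I set $j_0=k+1$ if $J=\emptyset$).

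It then remains to average. For $j\geq j_0$ the term contributes $\Delta$, and for $j\leq j_0-1$ it contributes $a_j$. Using that $a_j$ is non-increasing together with $a_{j_0-1}\geq z_{j_0-1}$, which holds because $j_0-1\notin J$, I would bound $\sum_{j=1}^{j_0-1}a_j\geq(j_0-1)\,a_{j_0-1}\geq(j_0-1)\,z_{j_0-1}$, whence
\[
	\E[X\cond\A\setminus\{\imax\}=\S]\geq\frac{(j_0-1)\,z_{j_0-1}+(k-j_0+1)\,\Delta}{k}=z_{j_0-1}+\frac{k-(j_0-1)}{k}\bigl(\Delta-z_{j_0-1}\bigr),
\]
which is exactly the $j=j_0-1$ term of the claimed minimum and hence at least $\min_{j=1,\dots,k}\{z_j+\frac{k-j}{k}(\Delta-z_j)\}$; the degenerate case $j_0=k+1$ collapses the right-hand side to $z_k$ and is handled identically. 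The main obstacle is the second monotonicity: the competitor scores $a_j$ genuinely depend on where $\imax$ is placed, which is precisely the subtlety that $\imax$'s own outgoing edges help later candidates, and the whole argument hinges on showing this dependence is monotone so that it lines up with the non-decreasing $z_j$ and the averaging telescopes down to the single term $m_{j_0-1}$.
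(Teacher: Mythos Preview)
Your proof is correct and takes a somewhat different route from the paper's. The paper works with a single quantity $a(\S)=\max_{j=2,\dots,k}\max_{i\in S_j}\deg{S_{<j}}{i}$ that does not depend on the slot $m$ into which $\imax$ is placed, obtains the lower bound $a(\S)+\frac{1}{k}\sum_m\chr{z_m>a(\S)}(\Delta-a(\S))$, and then treats $a(\S)$ adversarially: as a function of $a$ this expression is minimized when $a$ equals one of the $z_j$, which after a short simplification using $z_1\leq\dots\leq z_k$ yields the claimed minimum. You instead keep the slot-dependent $a_j$ coming straight out of \lemref{lem:fixed_partition}, establish the additional monotonicity $a_1\geq a_2\geq\dots\geq a_k$, and combine it with the monotonicity of the $z_j$ to show that the set of slots in which $\imax$ wins is an upper interval $\{j_0,\dots,k\}$; the average then collapses to exactly the $j=j_0-1$ term of the minimum. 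Your approach has the virtue of invoking \lemref{lem:fixed_partition} exactly as stated, whereas the paper's single-$a(\S)$ bound is not literally what that lemma says (the $a$ there genuinely depends on where $\imax$ sits) and requires a small additional justification. The monotonicity of $a_j$ is the new ingredient in your argument, and it makes the interval structure of the winning slots explicit rather than letting it emerge implicitly from an adversarial minimization.
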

\begin{proof}
	There are exactly $k$ partitions $\S'\in\Part(N)$ such that $\S'\setminus\{i^*\}=\S$, and each of them occurs with probability $1/k$, so
\begin{equation*}
\E[X \cond \A \setminus \{i^*\} = \S] = \frac{1}{k} \sum_{m=1}^k \E[X \cond \A \setminus \{i^*\} = \S, i^* \in A_m] .
\end{equation*}
By \lemref{lem:fixed_partition},
\begin{equation*}
	\E[X \cond \A \setminus \{i^*\} = \S] \geq a(\S) + \frac{1}{k} \sum_{m=1}^k \chr{z_m > a(\S)} \cdot (\Delta-a(\S)), 
\end{equation*}
where $a(\S) = \max_{j=2,\dots,k} \max_{i \in S_j} \deg{S_{<j}}{i}$. Note that the right-hand side is minimized when $a(\S)=z_m$ for some $m\in\{1,\dots,k\}$, so
\begin{align*}
	\E[X \cond \A\setminus\{i^*\}=\S] &\geq \min\nolimits_{j=1,\dots,k} \biggl\{ z_j + \frac{1}{k} \sum_{m=1}^k \chr{z_m>z_j} \cdot (\Delta-z_j) \biggr\} \\
	&= \min\nolimits_{j=1,\dots,k} \biggl\{ z_j + \frac{1}{k} \; \sum_{\mathrlap{m=j+1}}^k \chr{z_m>z_j} \cdot (\Delta-z_j) \biggr\}.
\end{align*}
Now observe that whenever $z_j=z_{j+1}$ for some $j=2,\dots,k$, the respective terms in the minimization are equal as well. This implies that the minimum will always be attained for some $j$ with $z_{j-1}<z_j$.
By setting $z_0=-1$ and simplifying,
\begin{align*}
	\E[X \cond \A \setminus \{i^*\} = \S] 
	&\geq \min\nolimits_{\substack{j=1,\dots,k\\z_{j-1}<z_{j}}} \biggl\{ z_j + \frac{1}{k} \sum_{m=j+1}^k (\Delta-z_j) \biggr\} \\
	&= \min\nolimits_{\substack{j=1,\dots,k\\z_{j-1}<z_{j}}} \biggl\{ z_j + \frac{k-j}{k} (\Delta-z_j) \biggr\} .
\end{align*}
Since $z_j+\frac{k-j}{k}(\Delta-z_j)$ attains its minimum for some $j$ with $z_{j-1}<z_{j}$, we can drop the condition that $z_{j-1}<z_{j}$ and obtain
\begin{equation*}
	\E[X \cond \A\setminus\{i^*\}=\S] \geq \min\nolimits_{j=1,\dots,k} \biggl\{ z_j + \frac{k-j}{k}(\Delta-z_j) \biggr\} 
\end{equation*}
as claimed.
\end{proof}

To obtain a bound on $\E[X]$, we will now average the expression obtained in \lemref{lem:fixed_super_partition} over the distribution on partitions of $N$. Like the former, the bound we obtain does not depend on the actual partitions, but only on the indegree $\deg{S_j}{\imax}$ of $\imax$ from each set $S_j$ in the partition. For $\Delta,k\in\N$, let $\part_k(\Delta)=\{\vec{v}\in\N^k\midd \sum_{i=1}^k v_i=\Delta\}$. For $\vec{v}\in\part_k(\Delta)$, let $\binom{\Delta}{\vec{v}}=\frac{\Delta!}{v_1!\cdots v_k!}$ be the number of partitions of a set with $\Delta$ elements into~$k$ sets of sizes $v_1,\dots,v_k$. We then have the following result.
\begin{lemma}  \label{lem:alpha_k}
	On any graph $G$ with maximum degree $\Delta=\Delta(G)$, the $k$-partition mechanism is $\alpha_k(\Delta)$-optimal, where
\[
	\alpha_k(\Delta) = \frac{1}{\Delta k^{\Delta}} \sum\nolimits_{\vec{v}\in\part_k(\Delta)} \binom{\Delta}{\vec{v}} \min\nolimits_{j=1,\dots,k} \Biggl\{\frac{k-j}{k} \sum_{\ell= 1}^{k} v_{\ell} + \frac{j}{k} \sum_{\ell=1}^{j-1} v_{\ell} \Biggr\}.
\]
\end{lemma}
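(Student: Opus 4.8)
The plan is to compute $\E[X]$ by averaging the bound from Lemma~\ref{lem:fixed_super_partition} over the random partition of $N\setminus\{\imax\}$, exploiting that the lower bound depends on the partition only through the vector of partial indegrees $(z_1,\dots,z_k)$ of $\imax$. First I would write
\[
	\E[X] = \sum_{\S\in\Part_k(N\setminus\{\imax\})} \P[\A\setminus\{\imax\}=\S] \cdot \E[X\cond \A\setminus\{\imax\}=\S]
\]
and substitute the bound $\E[X\cond\A\setminus\{\imax\}=\S]\geq \min_{j=1,\dots,k}\{z_j+\frac{k-j}{k}(\Delta-z_j)\}$ from Lemma~\ref{lem:fixed_super_partition}. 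The key observation is that the summand depends on $\S$ only through the quantities $z_j=\deg{S_{<j}}{\imax}$, and in fact only through how the $\Delta$ in-neighbours of $\imax$ are distributed among the $k$ sets.

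**Reparametrizing by the in-neighbour distribution:**

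The next step is to reindex the sum by the vector $\vec{v}=(v_1,\dots,v_k)\in\part_k(\Delta)$, where $v_\ell$ records how many of the $\Delta$ in-neighbours of $\imax$ land in set $S_\ell$. Since each vertex is assigned independently and uniformly to one of the $k$ sets, the number of ways the $\Delta$ in-neighbours realize a given profile $\vec{v}$ is the multinomial coefficient $\binom{\Delta}{\vec{v}}$, and each such assignment has probability $1/k^\Delta$; crucially, the non-neighbours of $\imax$ can be distributed arbitrarily without affecting the bound, so summing over them simply leaves the multinomial weight. Writing $z_j=\sum_{\ell=1}^{j-1}v_\ell$ (so $z_1=0$ and $z_k=\sum_{\ell=1}^{k-1}v_\ell$), I would algebraically rearrange
\[
	z_j+\frac{k-j}{k}(\Delta-z_j) = \frac{k-j}{k}\,\Delta + \frac{j}{k}\,z_j = \frac{k-j}{k}\sum_{\ell=1}^{k}v_\ell + \frac{j}{k}\sum_{\ell=1}^{j-1}v_\ell,
\]
using $\Delta=\sum_{\ell=1}^k v_\ell$. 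This matches exactly the expression inside the minimum in the claimed formula.

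**Assembling the bound:**

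Combining these pieces gives
\[
	\E[X] \geq \frac{1}{k^\Delta}\sum_{\vec{v}\in\part_k(\Delta)}\binom{\Delta}{\vec{v}}\min_{j=1,\dots,k}\Biggl\{\frac{k-j}{k}\sum_{\ell=1}^{k}v_\ell + \frac{j}{k}\sum_{\ell=1}^{j-1}v_\ell\Biggr\},
\]
and dividing by $\Delta=\Delta(G)$ yields the stated lower bound on $\alpha_k(\Delta)$, since $\alpha_k(\Delta)\geq \E[X]/\Delta$ by the definition of $\alpha$-optimality applied to the worst vertex $\imax$.

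**Main obstacle:**

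I expect the only delicate point to be the counting/probability bookkeeping in the reparametrization step: one must argue carefully that summing $\P[\A\setminus\{\imax\}=\S]$ over all partitions $\S$ inducing a fixed in-neighbour profile $\vec{v}$ yields precisely $\binom{\Delta}{\vec{v}}/k^\Delta$, which requires noting that the bound is constant across all partitions with the same $\vec{v}$ so the placement of the remaining $n-1-\Delta$ vertices marginalizes out cleanly. The algebraic identity rewriting $z_j+\frac{k-j}{k}(\Delta-z_j)$ into the two-term form is routine, as is the final division by $\Delta$.
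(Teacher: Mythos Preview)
Your proposal is correct and follows essentially the same route as the paper's proof: decompose $\E[X]$ over partitions of $N\setminus\{\imax\}$, invoke Lemma~\ref{lem:fixed_super_partition}, reparametrize by the in-neighbour profile $\vec{v}$ via the multinomial distribution, and perform the identical algebraic rewriting of $z_j+\frac{k-j}{k}(\Delta-z_j)$. The only slip is in your final sentence, where the inequality should read $\E[X]/\Delta\geq\alpha_k(\Delta)$ rather than the reverse, since $\alpha_k(\Delta)$ is \emph{defined} by the formula and the goal is to show the mechanism achieves at least that ratio.
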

\begin{proof}
	Consider a vertex $\imax$ with degree $\Delta$, and note that
\begin{align*}
	\E[X] = \sum\nolimits_{\S \in \Part_k(N \setminus \{i^*\})} \P[\A \setminus \{i^*\} = \S] \cdot \E[X \cond \A \setminus \{i^*\} = \S] .
\end{align*}
For $\S\in\Part_k(N\setminus\{i^*\})$, let $z_j(\S)=\deg{S_{<j}}{i^*}$. Then, by Lemma~\ref{lem:fixed_super_partition},
\begin{align*}
	\E[X] \geq \sum\nolimits_{\S \in \Part_k(N \setminus \{i^*\})} \P[\A \setminus \{i^*\} = \S] \cdot \min\nolimits_{j = 1,\dots,k} 
\biggl\{z_j(\S) + \frac{k-j}{k} (\Delta-z_j(\S))\biggr\}.
\end{align*}
For $\S\in\Part_k(N\setminus\{i^*\})$, let $v_j(\S)=\deg{S_{j}}{i^*}$. Then, $z_j(\S)=\sum_{m=1}^{j-1} v_m(\S)$, and
\begin{align*}
	\E[X] &\geq \sum\nolimits_{\S \in \Part_k(N \setminus \{i^*\})} \P[\A \setminus \{i^*\} = \S] \cdot \min\nolimits_{j=1,\dots,k}
	\Biggl\{\sum_{m=1}^{j-1} v_m(\S) + \frac{k-j}{k} \sum_{m=j}^{k} v_m(\S)\Biggr\}\\
	&= \sum\nolimits_{\S \in \Part_k(N \setminus \{i^*\})} \P[\A \setminus \{i^*\} = \S] \cdot \min\nolimits_{j=1,\dots,k} \Biggl\{\frac{k-j}{k}
\sum_{m=1}^{k} v_m(\S) + \frac{j}{k} \sum_{m=1}^{j-1} v_m(\S)\Biggr\} .
\end{align*}
Since $(z_1(\S),\dots,z_k(\S))$ follows a multinomial distribution with $\Delta$ trials and success probability $1/k$ for each category, we have that \begin{align*}
	\E[X] \geq  \frac{1}{k^{\Delta}} \sum\nolimits_{\vec{v}\in\part_k(\Delta)} \binom{\Delta}{\vec{v}} \min\nolimits_{j=1,\dots,k} \Biggl\{\frac{k-j}{k} \sum_{\ell= 1}^{k} v_{\ell} + \frac{j}{k} \sum_{\ell=1}^{j-1} v_{\ell} \Biggr\},
\end{align*}
and the claim follows.
\end{proof}

In the case of the $2$-partition mechanism, we obtained a lower bound on the degree of optimality by deriving a closed-form expression for $\alpha(\Delta)=\alpha_2(\Delta)$ that turned out to be monotonically non-decreasing in $\Delta$. While the complexity of $\alpha_k$ prevents us from taking the same route for $k>2$, monotonicity turns out to hold for any value of~$k$.
\begin{theorem}  \label{thm:alpha_k_mono}
	For any $k\geq 2$, $\alpha_k(\Delta)$ is non-decreasing in $\Delta$.
\end{theorem}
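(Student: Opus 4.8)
The plan is to reinterpret the formula for $\alpha_k(\Delta)$ probabilistically and reduce the monotonicity to a single pointwise inequality. Let $\vec{v}$ be a multinomial vector with $\Delta$ trials and success probability $1/k$ in each of the $k$ categories, so that $\binom{\Delta}{\vec v}/k^{\Delta}=\P[\vec v]$ and $\alpha_k(\Delta)=\frac{1}{\Delta}\,\E[T(\vec v)]$, where $T(\vec v)=\min_{j=1,\dots,k}\{\frac{k-j}{k}\sum_{\ell=1}^k v_\ell+\frac{j}{k}\sum_{\ell=1}^{j-1}v_\ell\}$ is the inner term. Since $\sum_{\ell=1}^k v_\ell=\Delta$, the substitution $w_j=\sum_{\ell=j}^k v_\ell$ turns each bracket into $\Delta-\frac{j}{k}w_j$, so that $T(\vec v)=\Delta-M(\vec v)$ with $M(\vec v)=\max_{j=1,\dots,k}\frac{j}{k}\sum_{\ell=j}^k v_\ell$. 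Hence $\alpha_k(\Delta)=1-\E[M(\vec v)]/\Delta$, and the theorem is equivalent to saying that $\E[M(\vec v)]/\Delta$ is \emph{non-increasing} in $\Delta$, i.e. $\Delta\,\E[M_{\Delta+1}]\le(\Delta+1)\,\E[M_\Delta]$, where $M_\Delta$ denotes $M$ evaluated on the $\Delta$-trial multinomial.

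The key structural fact is that $M$ is a maximum of the linear functionals $\langle c_j,\cdot\rangle$ with $c_j=\frac{j}{k}(\underbrace{0,\dots,0}_{j-1},1,\dots,1)$, hence convex and positively homogeneous of degree one on $\mathbb{R}_{\ge0}^k$. I would exploit this through an exchangeability (``remove a random ball'') coupling between the $(\Delta+1)$- and $\Delta$-trial multinomials. Realizing $\vec v$ as $\Delta+1$ balls thrown independently and uniformly into $k$ bins and then deleting one ball chosen uniformly at random yields a $\Delta$-trial multinomial; since a bin $b$ hosts $v_b$ of the $\Delta+1$ balls, this gives $\E[M_\Delta]=\E_{\vec v}\big[\sum_{b=1}^k \frac{v_b}{\Delta+1}\,M(\vec v-e_b)\big]$, where $\vec v$ ranges over $(\Delta+1)$-trial outcomes and $e_b$ is the $b$-th unit vector. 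Multiplying by $\Delta+1$ and comparing with $\Delta\,\E[M_{\Delta+1}]=\E[(|\vec v|-1)\,M(\vec v)]$ (here $|\vec v|=\sum_b v_b=\Delta+1$) reduces the entire claim to the pointwise inequality
\begin{equation*}
	\sum_{b=1}^k v_b\,M(\vec v-e_b)\ \ge\ (|\vec v|-1)\,M(\vec v)\qquad\text{for every }\vec v\in\N^k. \tag{$\star$}
\end{equation*}

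Finally, I would prove $(\star)$ by a supporting-hyperplane argument. Fix $\vec v$ and an index $j^*$ attaining $M(\vec v)=\langle c_{j^*},\vec v\rangle$; then $c_{j^*}$ is a subgradient of the convex function $M$ at $\vec v$, so $M(\vec v-e_b)\ge M(\vec v)-(c_{j^*})_b$ for every $b$. Summing these with weights $v_b$ gives $\sum_b v_b\,M(\vec v-e_b)\ge |\vec v|\,M(\vec v)-\langle c_{j^*},\vec v\rangle$, and since $\langle c_{j^*},\vec v\rangle=M(\vec v)$ by the choice of $j^*$, the right-hand side equals $(|\vec v|-1)M(\vec v)$, which is exactly $(\star)$. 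The main conceptual obstacle is the reduction itself: mere subadditivity of $\E[M_\Delta]$, which sublinearity of $M$ would yield at once, does \emph{not} imply monotonicity of $\E[M_\Delta]/\Delta$, so one genuinely needs the exchangeable coupling to obtain the sharper pointwise statement $(\star)$; once that is in place, the convexity of $M$ closes the argument cleanly.
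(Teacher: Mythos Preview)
Your proof is correct and, at its mathematical core, is the same argument as the paper's: your ``remove a random ball'' coupling is exactly the paper's combinatorial reindexing from $\part_k(\Delta)$ to $\part_k(\Delta+1)$, and your subgradient inequality $(\star)$ is the complementary form of the paper's swap of $\min_j$ and $\sum_i$ (the paper works with $T(\vec v)=\min_j\langle \vec v,\vec w^{j,k}\rangle$ directly, whereas you pass to $M=|\vec v|-T$, so your lower bound on $M$ is their upper bound on $T$). The only difference is packaging: you phrase the key step via convexity and a supporting hyperplane, the paper phrases it as ``exchanging the order of summation and minimization can only increase the value,'' but these are the same inequality viewed from opposite sides.
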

\begin{proof}
	We can reformulate the bound of \lemref{lem:alpha_k} to obtain that
\begin{align*}
	\alpha_k(\Delta) &= \frac{1}{\Delta k^{\Delta}} \sum\nolimits_{\vec{v}\in\part_k(\Delta)} \binom{\Delta}{\vec{v}} \min\nolimits_{j=1,\dots,k} \Biggl\{\frac{k-j}{k} \sum_{\ell= 1}^{k} v_{\ell} + \frac{j}{k} \sum_{\ell=1}^{j-1} v_{\ell} \Biggr\} \\
	&= \frac{1}{\Delta k^{\Delta}} \sum\nolimits_{\vec{v}\in\part_k(\Delta)} \binom{\Delta}{\vec{v}} \min\nolimits_{j=1,\dots,k} \langle \vec{v}, \vec{w}^{j,k}\rangle ,
\end{align*}
where $\vec{w}^{j,k}\in\Q^k$ 
with 
$w^{j,k}_i=1$ if $i<j$ and $w^{j,k}_i=(k-j)/k$ otherwise.

Instead of summing over all vectors $\vec{v}\in\part_k(\Delta)$, we may instead sum over all vectors $\vec{v}'\in\part_k(\Delta+1)$ and decrease one of the non-zero entries $v'_i$ by $1$. Thus,
\begin{align*}
\alpha_k(\Delta) &= \frac{1}{\Delta k^{\Delta+1}} \sum\nolimits_{\vec{v}\part_k(\Delta+1)} \binom{\Delta+1}{\vec{v}} \frac{1}{\Delta+1} \sum_{i=1}^k v_i \min\nolimits_{j=1,\dots,k}  \langle \vec{v} - \vec{e}^{i,k}, \vec{w}^{j,k} \rangle, 
\end{align*}
where $\vec{e}^{i,k}$ is the $i$th unit vector in $k$ dimensions, \ie $e^{i,k}_\ell=1$ if $\ell=i$ and $e^{i,k}_\ell=0$ otherwise.

If we exchange the order of the summation over $i$ and the minimization over $j$, the value of the expression can only increase, so
{\allowdisplaybreaks
\begin{align*}
	\alpha_k(\Delta) & \leq \frac{1}{\Delta k^{\Delta+1}} \sum\nolimits_{\vec{v}\in\part(\Delta+1)} \binom{\Delta+1}{\vec{v}} \frac{1}{\Delta+1} \min\nolimits_{j=1,\dots,k}  \sum_{i=1}^k v_i  \langle \vec{v} - \vec{e}^{i,k}, \vec{w}^{j,k} \rangle \\
	&= \frac{1}{(\Delta+1) k^{\Delta+1}} \sum\nolimits_{\vec{v}\in\part_k(\Delta+1)} \binom{\Delta+1}{\vec{v}} \frac{1}{\Delta} \min\nolimits_{j=1,\dots,k}  \sum_{i=1}^k \bigl(v_i  \langle \vec{v}, \vec{w}^{j,k} \rangle - v_i\, w^{j,k}_i\bigr)\\
	&= \frac{1}{(\Delta+1) k^{\Delta+1}} \sum\nolimits_{\vec{v}\in\part_k(\Delta+1)} \binom{\Delta+1}{\vec{v}} \frac{1}{\Delta} \min\nolimits_{j=1,\dots,k}  \Bigl( \langle \vec{v}, \vec{w}^{j,k}\rangle \sum_{i=1}^k v_i  - \langle \vec{v}, \vec{w}^{j,k} \rangle \Bigr) \\
	&= \frac{1}{(\Delta+1) k^{\Delta+1}} \sum\nolimits_{\vec{v}\in\part_k(\Delta+1)} \binom{\Delta+1}{\vec{v}} \min\nolimits_{j=1,\dots,k}  \langle \vec{v}, \vec{w}^{j,k}\rangle \\[1ex]
	&= \alpha_k(\Delta+1)  \qedhere
\end{align*}}
\end{proof}

Monotonicity of $\alpha_k$ allows us to obtain a lower bound on the degree of optimality of the $k$-partition mechanism by bounding $\alpha_k(1)$ from below. The following is our main result.
\begin{theorem}  \label{thm:k-partition}
	The $k$-partition mechanism for $k\geq 2$ is $\frac{k-1}{2k}$-optimal.
\end{theorem}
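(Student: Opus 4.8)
The plan is to combine the monotonicity established in \thmref{thm:alpha_k_mono} with a direct evaluation of $\alpha_k$ at the smallest admissible value of the maximum degree. Since \thmref{thm:alpha_k_mono} asserts that $\alpha_k(\Delta)$ is non-decreasing in $\Delta$, the infimum over all graphs with $\Delta(G)>0$ is controlled by the case $\Delta=1$, so it suffices to show that $\alpha_k(1)=\frac{k-1}{2k}$ and then invoke monotonicity to conclude $\alpha_k(\Delta)\geq\frac{k-1}{2k}$ for all $\Delta\geq 1$.

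To evaluate $\alpha_k(1)$, I would specialize the formula of \lemref{lem:alpha_k}. The index set $\part_k(1)$ consists of exactly the $k$ vectors $\vec{v}$ having a single coordinate equal to $1$ and all others equal to $0$, and for each such vector $\binom{1}{\vec{v}}=1$. Writing $\vec{v}$ for the vector whose $m$th coordinate is $1$, the total $\sum_{\ell=1}^k v_\ell$ equals $1$, while the partial sum $\sum_{\ell=1}^{j-1}v_\ell$ equals $1$ precisely when $m\leq j-1$ and equals $0$ otherwise.

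The next step is the inner minimization. With the above, the bracketed expression $\frac{k-j}{k}\sum_{\ell=1}^k v_\ell+\frac{j}{k}\sum_{\ell=1}^{j-1}v_\ell$ equals $\frac{k-j}{k}$ for $j\leq m$ and equals $\frac{k-j}{k}+\frac{j}{k}=1$ for $j>m$. The first branch is decreasing in $j$ and attains its smallest value $\frac{k-m}{k}$ at $j=m$, which is at most the value $1$ taken on the second branch; hence $\min_{j=1,\dots,k}\{\cdots\}=\frac{k-m}{k}$ for the vector supported on coordinate $m$. Summing the contributions of the $k$ unit vectors then gives
\[
	\alpha_k(1)=\frac{1}{1\cdot k}\sum_{m=1}^{k}\frac{k-m}{k}=\frac{1}{k^2}\sum_{m=1}^{k}(k-m)=\frac{1}{k^2}\cdot\frac{k(k-1)}{2}=\frac{k-1}{2k},
\]
where I used $\sum_{m=1}^{k}(k-m)=\binom{k}{2}$. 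Combined with the monotonicity of $\alpha_k$, this establishes the claim.

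I do not expect a genuine obstacle in this final step: the only real work is the elementary case analysis for the inner minimum and the arithmetic series, since the heavy lifting—controlling how $\alpha_k$ varies with $\Delta$ so that the worst case can be pinned down at $\Delta=1$—was already carried out in \thmref{thm:alpha_k_mono}. The one place to be careful is verifying that $\frac{k-m}{k}\leq 1$ so that the minimizer indeed lies in the branch $j\leq m$ rather than $j>m$, which holds trivially for all $m\in\{1,\dots,k\}$.
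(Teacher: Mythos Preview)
Your proposal is correct and follows essentially the same route as the paper: both reduce to $\Delta=1$ via \thmref{thm:alpha_k_mono}, specialize \lemref{lem:alpha_k} to the $k$ unit vectors, and evaluate the resulting arithmetic series. The only cosmetic difference is that the paper phrases the inner step as taking the ``pointwise minimum'' of the weight vectors, obtaining $\alpha_k(1)\geq\langle \vec{v},(\tfrac{k-1}{k},\dots,\tfrac{1}{k},0)\rangle$ for each $\vec{v}\in\part_k(1)$, whereas you compute the minimum over $j$ directly by the case split $j\leq m$ versus $j>m$; for unit vectors these two descriptions coincide, so the computations and final sum are identical.
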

\begin{proof} 
	In light of \thmref{thm:alpha_k_mono}, it suffices to show that $\alpha_k(1) \geq \frac{k-1}{2k}$ for every $k\geq 2$.

	By \lemref{lem:alpha_k},
\begin{equation*}
	\alpha_k(1) = \frac{1}{k} \sum\nolimits_{\vec{v}\in\part_k(1)} \min\nolimits_{j=1,\dots,k} \Biggl\{ \frac{k-j}{k} \sum_{\ell=1}^k v_{\ell} + \frac{j}{k} \sum_{\ell=1}^{j-1} v_{\ell}\Biggr\}
\end{equation*}
Taking the pointwise minimum,
\begin{equation*}
	\alpha_1(k) \geq \frac{1}{k} \sum\nolimits_{\vec{v}\in\part_k(1)} \Big\langle \vec{v}, \Bigl( \frac{k-1}{k},\frac{k-2}{k},\dots,\frac{1}{k},0\Bigr)\Big\rangle
\end{equation*}
In the sum, every unit vector occurs exactly once, and thus
\begin{align*}
	\alpha_k(1) \geq \frac{1}{k} \sum_{i=1}^{k} \frac{k-i}{k} = \frac{1}{k^2} \sum_{i=0}^{k-1} i = \frac{k (k-1)}{2k^2} = \frac{k-1}{2k}.  \tag*{\raisebox{-.5\baselineskip}{\qedhere}}
\end{align*}
\end{proof}

\section{The Permutation Mechanism}
\label{sec:permutation}

We have started from the simple result that no impartial selection mechanism can be more than $1/2$-optimal, and in the previous section identified a class of mechanisms parameterized by $k\in\N$ that attains this bound in the limit as $k$ goes to infinity. It turns out that the bound can also be attained exactly, by a limiting mechanism for the above class.
This mechanism, which we call the permutation mechanism, considers the vertices one by one according to a random permutation $\pi=(\pi_1,\dots,\pi_n)$ and in each step campares the current vertex $\pi_j$ to a single candidate vertex $\pi_{\ell}$ with $\ell<j$. In determining the degree of the candidate vertex $\pi_\ell$ it takes into account the outgoing edges of vertices $\pi_1,\dots,\pi_{\ell-1}$. For the degree of the current vertex $\pi_j$ it takes into account the outgoing edges of vertices $\pi_1,\dots,\pi_{j-1}$, except $\pi_{\ell}$. If the latter is greater than or equal than the former, $\pi_j$ becomes the new candidate vertex, and the candidate vertex after the final step is the one selected by the mechanism. Again it is easy to see that this mechanism is impartial, because it only takes into account the outgoing edges of vertices that can no longer be selected.
\begin{algorithm}[tb]
\KwIn{Graph $G = (N,E)$}
\KwOut{Vertex $i \in N$}
Choose a permutation $\bigl(\pi_1,\dots,\pi_{|N|}\bigr)$ of $N$ uniformly at random\;
\label{line:permutation}
Set $\imax := \pi_1$, $\dmax := 0$\;
\For{$j=2,\dots,|N|$}{
    \If{$\deg{\pi_{<j} \setminus \{\imax\}}{\pi_j} \geq \dmax$}{
        Set $\imax := \pi_j$, $\dmax := \deg{\pi_{<j}}{\pi_j}$\;
    }
}
return $\imax$\;
\caption{The permutation mechanism}
\label{alg:permutation}
\end{algorithm}
A formal description of the mechanism is given in \figref{alg:permutation}, and we obtain the following result.
\begin{theorem}
\label{thm:permutation}
The permutation mechanism is $1/2$-optimal.
\end{theorem}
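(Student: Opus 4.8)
The plan is to treat the permutation mechanism as the limit of the $k$-partition mechanism and to transfer the bound of \thmref{thm:k-partition}. Since the permutation mechanism is impartial and, as observed via \figref{fig:upper}, no impartial mechanism can be better than $1/2$-optimal, it suffices to prove that for every graph $G$ with $\Delta=\Delta(G)>0$ the expected degree of the selected vertex is at least $\Delta/2$.

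Fix such a graph $G$ on $n$ vertices, and for $k\geq n$ run the $k$-partition mechanism on $G$. Let $D_k$ denote the event that \lineref{line:k_partition} assigns the $n$ vertices to $n$ distinct sets, and note that $\P[D_k]=\prod_{i=0}^{n-1}(1-i/k)\to 1$ as $k\to\infty$. Conditional on $D_k$, listing the occupied sets in increasing order of index induces a uniformly random permutation of the vertices, because the assignment is uniform and independent across vertices. I would then argue that, conditional on $D_k$, the $k$-partition mechanism and the permutation mechanism induce the same distribution over selected vertices: when every occupied set is a singleton, the iterations of the $k$-partition mechanism that touch a nonempty set correspond, after skipping empty sets, one-to-one to the steps of the permutation mechanism on the induced permutation; the guard $\max_{i\in A_j}\degb{A_{<j}\setminus\{\imax\}}{i}\geq\dmax$, with $A_j=\{\pi_j\}$ a singleton, reduces exactly to the comparison $\degb{\pi_{<j}\setminus\{\imax\}}{\pi_j}\geq\dmax$ of the permutation mechanism, and the $\arg\max$ in \lineref{line:choose} is over a single vertex and hence deterministic. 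Writing $X$ for the degree selected by the permutation mechanism and $X_k$ for that selected by the $k$-partition mechanism, this yields $\E[X_k\mid D_k]=\E[X]$ for every $k\geq n$.

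The bound then follows by a short limiting computation. Decomposing $\E[X_k]=\P[D_k]\,\E[X]+\P[\lnot D_k]\,\E[X_k\mid\lnot D_k]$, bounding $\E[X_k\mid\lnot D_k]\leq\Delta$, and invoking \thmref{thm:k-partition} in the form $\E[X_k]\geq\frac{k-1}{2k}\,\Delta$, I obtain
\[
	\E[X]\geq\frac{1}{\P[D_k]}\Bigl(\tfrac{k-1}{2k}\,\Delta-\P[\lnot D_k]\,\Delta\Bigr).
\]
Letting $k\to\infty$, the right-hand side tends to $\Delta/2$ since $\P[D_k]\to 1$ and $\P[\lnot D_k]\to 0$, while the left-hand side does not depend on $k$; hence $\E[X]\geq\Delta/2$. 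I expect the only real obstacle to be the equivalence claimed in the second paragraph: one must check the bookkeeping around empty sets and the arbitrary initial candidate in $A_1$ (the first occupied set should play the role of $\pi_1$ with $\dmax=0$, and empty sets leave the state unchanged) in order to be certain the two processes coincide step for step. Everything else is a routine limiting argument.
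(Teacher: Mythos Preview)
Your proposal is correct and follows essentially the same route as the paper: couple the permutation mechanism with the $k$-partition mechanism on the event that all sets are singletons, invoke \thmref{thm:k-partition}, and let $k\to\infty$. Your decomposition $\E[X_k]=\P[D_k]\,\E[X]+\P[\lnot D_k]\,\E[X_k\mid\lnot D_k]$ together with the trivial bound $\E[X_k\mid\lnot D_k]\leq\Delta$ is in fact a cleaner way to pass to the limit than the paper's somewhat informal inequality $\E[X]\geq\P[D_k]\,\E[X_k]$, and your flagged bookkeeping around empty sets and the initial candidate is exactly the detail the paper also glosses over.
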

\begin{proof}
	Assume for the sake of contradiction that there exists a graph $G=(N,E)$ such that the permutation mechanism is strictly less than $1/2$-optimal on $G$. Let $n=|N|$ and $\Delta=\Delta(G)$, and denote by $X$ and by $X_k$ for $k\geq 2$ the degree of the vertex respectively selected from $G$ by the permutation and $k$-partition mechanisms. Note that $X$ and $X_k$ are random variables subject to the internal randomness of the respective mechanism. Finally let $\alpha=\E[X]/\Delta$, and note that $\alpha<1/2$ by assumption.
	
	The outcomes of the permutation mechanism and the $k$-partition mechanism agree under the condition that the partition $(A_1,\dots,A_k)$ chosen by the latter satisfies $|A_i|\leq$ for $i=1,\dots,k$, so
	\begin{equation*}
		\E[X] \geq \P\bigl[\text{$|A_i|\leq 1$ for $i=1\dots,k$}\bigr] \cdot \E[X_k] .
	\end{equation*}
	For any $k\geq n$,
	\begin{align*}
	\P\bigl[\text{$|A_i|\leq 1$ for $i=1\dots,k$}\bigr] 
	&= \frac{k \cdot (k-1)  \cdot\, \dots \,\cdot (k-n+1)}{k^n} \geq \frac{(k-n)^{\mathrlap{n}}}{k^n} ,
	\end{align*}
	and thus
	\begin{equation*}
		\E[X] \geq \frac{k-n+1}{k}\; \E[X_k] \geq \frac{(k-n)^{\mathrlap{n}}}{k^n}\;\; \frac{k-1}{2k}\; \Delta ,
	\end{equation*}
	where the second inequality follows from \thmref{thm:k-partition}.
	For any fixed $n$, 
	\begin{equation*}
		\lim_{\mathclap{k\rightarrow\infty}} \left(\frac{(k-n)^{\mathrlap{n}}}{k^n}\;\; \frac{k-1}{2k}\right) = \frac{1}{2} > \alpha ,
	\end{equation*}
	and we can choose $k$ such that
	\begin{equation*}
		\frac{(k-n)^{\mathrlap{n}}}{k^n}\;\; \frac{k-1}{2k} > \alpha .
	\end{equation*}
	Therefore, $\E[X]>\alpha\Delta$, a contradiction. 
\end{proof}

A potential downside of the permutation mechanism is that it considers agents one by one and therefore cannot process nominations anonymously. This may be of concern in situations where agents do not want their opinion regarding other agents to be publicly known. In the $k$-partition mechanism for some fixed value of $k$, on the other hand, the nominations submitted by agents in block $A_j$ of the partition can be processed simultaneously and thus with partial anonymity. It is an interesting question whether this tradeoff between anonymity and approximate optimality is intrinsic to the problem, or whether there exits a different mechanism that achieves the same degree of optimality as the permutation mechanism but a higher degree of anonymity.

\section{No Abstentions}
\label{sec:no_abstentions}

Let us finally consider the interesting special case of graphs in which every vertex has outdegree at least~$1$. This case corresponds to settings without abstentions and in particular includes the setting of \citet{HoMo13a}, where every agent submits exactly one nomination. 
In \secref{sec:2_partition} we obtained an improved lower bound of $3/8$ instead of $1/4$ for the special case using the simple observation that the $2$-partition mechanism is optimal on graphs with maximum degree~$1$ and we can therefore focus on graphs with maximum degree at most~$2$. The same argument can be applied to the $k$-partition mechanism as well.

Let $\G_n^{+}= \{(N,E)\in\G_n\midd \min_{i\in N}\degp{i,(N,E)}\geq 1\}$, where $\degp{i,(N,E)}=|\{(i,j)\in E\midd j\in N\}\}|$, and $\G^+=\bigcup_{n\in\N}\G_n^+$.
The following lemma provides an alternative expression for $\alpha_k(2)$, which will subsequently be used to derive an improved bound.
\begin{lemma}  \label{lem:delta_2}
	On any graph $G$ with maximum degree $\Delta(G)=2$, the $k$-partition mechanism for $k\geq 2$ is $\alpha_k(2)$-optimal, where
\begin{equation*}
	\alpha_k(2) = 1 - \frac{1}{k^3} \sum\nolimits_{x_1,x_2\in\{1,\dots,k\}} \max \biggl\{ \frac{\max\{x_1,x_2\}}{2}, \min\{x_1,x_2\} \biggr\} .
\end{equation*}
\end{lemma}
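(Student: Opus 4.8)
The plan is to specialize the bound of \lemref{lem:alpha_k} to $\Delta=2$ and re-express the multinomial sum in terms of the two vertices nominating $\imax$. First I would reinterpret the weighted sum $\sum_{\vec{v}\in\part_k(2)}\binom{2}{\vec{v}}(\cdot)$ as a sum over ordered pairs $(x_1,x_2)\in\{1,\dots,k\}^2$, where $x_1$ and $x_2$ record the blocks to which the two in-neighbours of $\imax$ are assigned. Since the multinomial coefficient $\binom{2}{\vec{v}}$ counts exactly the ordered pairs inducing the composition $\vec{v}$, and the summand (the inner minimum supplied by \lemref{lem:fixed_super_partition}) depends only on $\vec{v}$, the two sums coincide, the right-hand one having $k^2$ terms in total.

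Next, for a fixed pair $(x_1,x_2)$ I would evaluate the inner minimization explicitly. Writing the objective as $\frac{j}{k}z_j+\frac{k-j}{k}\Delta$ with $z_j=\deg{S_{<j}}{\imax}=|\{i:x_i<j\}|$, and using $\Delta=2$, it simplifies to $2+\frac{j(z_j-2)}{k}$. Setting $m=\min\{x_1,x_2\}$ and $M=\max\{x_1,x_2\}$, the step function $z_j$ takes the value $0$ for $j\le m$, the value $1$ for $m<j\le M$, and the value $2$ for $j>M$. On each range the objective is affine in $j$, so its minimum over $j$ is attained at a breakpoint; a short computation then gives $\min_{j}\{\cdots\}=2-\frac{1}{k}\max\{2m,M\}=2-\frac{2}{k}\max\{\tfrac{M}{2},m\}$, with the degenerate (empty) middle range in the case $x_1=x_2$ fitting the same formula.

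Finally I would substitute this value back into the expression of \lemref{lem:alpha_k}, whose prefactor for $\Delta=2$ is $\frac{1}{2k^2}$. The constant part $\frac{1}{2k^2}\sum_{(x_1,x_2)}2=1$ produces the leading term, and the remaining part collapses to $-\frac{1}{k^3}\sum_{x_1,x_2}\max\{\tfrac{\max\{x_1,x_2\}}{2},\min\{x_1,x_2\}\}$, which is exactly the claimed identity. The main obstacle is the middle step: one must track $z_j$ carefully across both breakpoints, confirm that the piecewise-affine minimum equals $2-\frac{2}{k}\max\{M/2,m\}$, and verify that the boundary equalities and the degenerate case $x_1=x_2$ are handled consistently, so that the single symmetric closed form holds uniformly over all pairs $(x_1,x_2)$.
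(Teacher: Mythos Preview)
Your proof is correct and takes essentially the same route as the paper: both reduce the computation to the placements $(x_1,x_2)$ of the two in-neighbours of $\imax$, evaluate the inner minimum as $2-\tfrac{1}{k}\max\{2m,M\}$ with $m=\min\{x_1,x_2\}$ and $M=\max\{x_1,x_2\}$, and then average over all pairs. The only cosmetic difference is the entry point---you specialize \lemref{lem:alpha_k} and locate the minimum via the breakpoints of the piecewise-affine function $j\mapsto 2+\tfrac{j(z_j-2)}{k}$, whereas the paper starts one step earlier from \lemref{lem:fixed_super_partition}, reintroduces the block $y$ of $\imax$, and carries out the same case split with indicator functions before summing $y$ out.
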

\begin{proof}
	Let $k\geq 2$. Consider a graph $G=(N,E)$ with $\Delta(G)=2$, and note that there must exist $i_1,i_2,\imax\in N$ such that $i_1\neq i_2$ and $\{(i_1,\imax),(i_2,\imax)\}\subseteq E$. Consider the partition $\vec{A}=(A_1,\dots,A_k)$ chosen by the $k$-partition mechanism and let $x_1,x_2,y\in\{1,\dots,k\}$ such that $i_1\in A_{x_1}$, $i_2\in A_{x_2}$, and $\imax\in A_{y}$. Note that $x_1$, $x_2$, and $y$ are independent random variables distributed uniformly on $\{1,\dots,k\}$. Denote $\xmax=\max\{x_1,x_2\}$, and $\xmin=\min\{x_1,x_2\}$. Then, by \lemref{lem:fixed_super_partition},
\begin{align*} 
	\alpha_k(2) &\geq \frac{1}{2} \cdot \frac{1}{k^3} \sum\nolimits_{x_1,x_2,y\in\{1,\dots,k\}} \Bigr(2 \chrb{y >
\xmax} + \chrb{y \leq \xmax} \cdot \chrb{\xmax \leq 2(\xmax - \xmin)} + {} \\[-1ex]
	& \hspace*{6.575cm}
	2\chrb{\xmin \leq y \leq \xmax} \cdot \chrb{2(\xmax - \xmin) < \xmax} \! \Bigr) \\
	&= \frac{1}{k^2} \sum\nolimits_{x_1,x_2\in\{1,\dots,k\}} \biggl( \frac{k-\xmax}{k} + \frac{1}{2} \cdot \frac{\xmax}{k} \cdot
\chrb{2\xmin \leq \xmax} + \frac{\xmax-\xmin}{k} \cdot \chrb{2\xmin > \xmax} \! \biggr) \\
	&= \frac{1}{k^2} \sum\nolimits_{x_1,x_2\in\{1,\dots,k\}} \biggl(1 - \frac{\xmax}{2k} \cdot \chrb{2\xmin \leq \xmax} - \frac{\xmin}{k} \cdot \chrb{2\xmin > \xmax} \! \biggr) \\
	&= 1 - \frac{1}{k^3} \sum\nolimits_{x_1,x_2\in\{1,\dots,k\}} \max\biggl\{ \frac{\xmax}{2}, \xmin \biggr\} . \tag*{\raisebox{-.5\baselineskip}{\qedhere}}
\end{align*}
\end{proof}

We are now in a position to prove the main result of this section.
\begin{theorem}
	The permutation mechanism is $7/12$-optimal on $\G^{+}$.
\end{theorem}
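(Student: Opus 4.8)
The key is to mimic the passage from the $k$-partition mechanism to the permutation mechanism that was used in \thmref{thm:permutation}, but now restricting attention to $\G^+$ and starting from the stronger per-$\Delta$ bound for $\Delta=2$ rather than the global bound $\frac{k-1}{2k}$. First I would observe that on $\G^+$ the maximum degree is always at least $1$, and the case $\Delta(G)=1$ is trivial: when every vertex has indegree exactly one, every vertex attains the maximum and the permutation mechanism is optimal, so $7/12 \le 1$ holds vacuously. Hence it suffices to treat graphs with $\Delta(G)\ge 2$. By \thmref{thm:alpha_k_mono}, $\alpha_k$ is non-decreasing in $\Delta$, so for any such graph the $k$-partition mechanism is at least $\alpha_k(2)$-optimal. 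The heart of the argument is therefore to show that $\alpha_k(2)$, in the limit $k\to\infty$, is at least $7/12$, and then to transfer this limiting bound to the permutation mechanism exactly as before.

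\textbf{Evaluating the limit of $\alpha_k(2)$.} Using the closed form from \lemref{lem:delta_2},
\[
	\alpha_k(2) = 1 - \frac{1}{k^3} \sum_{x_1,x_2\in\{1,\dots,k\}} \max\biggl\{\frac{\max\{x_1,x_2\}}{2},\,\min\{x_1,x_2\}\biggr\},
\]
I would interpret the double sum as a Riemann sum. Writing $s=x_1/k$ and $t=x_2/k$, as $k\to\infty$ the normalized sum $\frac{1}{k^2}\sum_{x_1,x_2}\max\{\cdots\}$ converges to $\int_0^1\!\!\int_0^1 \max\{\tfrac{1}{2}\max\{s,t\},\min\{s,t\}\}\,ds\,dt$, so that the extra factor $1/k$ outside gives $\lim_{k\to\infty}\alpha_k(2) = 1 - \int_0^1\!\!\int_0^1 \max\{\tfrac{1}{2}\max\{s,t\},\min\{s,t\}\}\,ds\,dt$. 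By symmetry in $s$ and $t$ I would restrict to the region $t\le s$, where $\max=s$ and $\min=t$, and split according to whether $t\ge s/2$ (so the minimum of the two candidates is $t$) or $t<s/2$ (so it is $s/2$). A routine evaluation of the resulting integrals over the triangle $\{0\le t\le s\le 1\}$ and doubling gives the value of the double integral; subtracting from $1$ should yield exactly $7/12$.

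\textbf{Transferring to the permutation mechanism.} Once I know $\lim_{k\to\infty}\alpha_k(2) = 7/12$, the final step reuses verbatim the limiting argument of \thmref{thm:permutation}. Suppose for contradiction that the permutation mechanism is strictly less than $7/12$-optimal on some $G\in\G^+$, say with ratio $\alpha<7/12$; by the reduction above we may assume $\Delta(G)\ge 2$. Conditioning on the event that the random partition places at most one vertex in each block, the $k$-partition and permutation mechanisms coincide, so $\E[X] \ge \P[\,|A_i|\le 1\ \forall i\,]\cdot\E[X_k]\ge \frac{(k-n)^n}{k^n}\,\alpha_k(2)\,\Delta$. Letting $k\to\infty$ with $n$ fixed, the prefactor tends to $1$ and $\alpha_k(2)\to 7/12>\alpha$, so for $k$ large enough the right-hand side exceeds $\alpha\Delta$, contradicting $\E[X]=\alpha\Delta$.

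\textbf{Main obstacle.} The only substantive work is the evaluation of the limiting double integral, and the place I would be most careful is the Riemann-sum justification: one must check that the boundary terms and the discrete-versus-continuous discrepancy in $\max\{x_1,x_2\}/2$ versus $\min\{x_1,x_2\}$ contribute only $O(1/k)$ so that the limit of $\alpha_k(2)$ is genuinely the integral, and then confirm that the piecewise split at $t=s/2$ is handled with the correct case boundaries. The algebra of the integral itself is elementary, but getting the constant to come out to exactly $7/12$ (rather than an off-by-a-factor value) is where an error would most likely creep in, so I would double-check it by also computing $\alpha_k(2)$ for a small value of $k$ and verifying consistency with the monotonicity already established.
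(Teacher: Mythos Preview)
Your proposal is correct and follows the same overall architecture as the paper: dispose of the case $\Delta(G)=1$, invoke monotonicity (\thmref{thm:alpha_k_mono}) to reduce to $\alpha_k(2)$, compute $\lim_{k\to\infty}\alpha_k(2)=7/12$, and then port the bound to the permutation mechanism via the coupling argument of \thmref{thm:permutation}. The one place you diverge is in how the limit of $\alpha_k(2)$ is computed. The paper sets $\beta_k=\sum_{x_1,x_2}\max\{\tfrac{1}{2}\max\{x_1,x_2\},\min\{x_1,x_2\}\}$, derives the recursion $\beta_k=\beta_{k-1}+\tfrac{5}{4}k^2+o(k^2)$ by peeling off the terms with $x_1=k$ or $x_2=k$, and sums to $\beta_k=\tfrac{5}{12}k^3+o(k^3)$. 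Your Riemann-sum route is a cleaner alternative: the integrand is piecewise linear and bounded on $[0,1]^2$, so the $O(1/k)$ error you worry about is immediate, and the integral evaluates (with your split at $t=s/2$) to $2\int_0^1\bigl(\tfrac{s^2}{4}+\tfrac{3s^2}{8}\bigr)\,ds=\tfrac{5}{12}$, giving $1-\tfrac{5}{12}=\tfrac{7}{12}$ directly. Either computation yields the same constant; yours avoids the recursion bookkeeping at the cost of a (routine) continuity argument.
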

\begin{proof}
	Let $G=(N,E)\in\G^{+}$. If $\Delta(G)=1$, then $\deg{}{i,G}=1$ for all $i\in N$ and every mechanism, including the permutation mechanism, is optimal on $G$. If $\Delta(G)\geq 2$, then by \lemref{lem:alpha_k}, \thmref{thm:alpha_k_mono}, and \lemref{lem:delta_2}, the $k$-partition mechanism is $\alpha_k(2)$-optimal on $G$, where $\alpha_k(2) = 1 - \beta_k/k^3$ with 
\begin{equation*}
	\beta_k = \sum\nolimits_{x_1,x_2\in\{1,\dots,k\}} \max \biggl\{ \frac{\max\{x_1,x_2\}}{2}, \min\{x_1,x_2\} \biggr\} .
\end{equation*}
Now, 
\begin{align*}
	\beta_k ={}& \sum\nolimits_{x_1,x_2\in \{1,\dots,k-1\}} \max \biggl\{ \frac{\max\{x_1,x_2\}}{2}, \min\{x_1,x_2\} \biggr\} + {} \\
	& \sum\nolimits_{x_2\in\{1,\dots,k-1\}} \max \biggl\{ \frac{\max\{k,x_2\}}{2}, \min\{k,x_2\} \biggr\} + {} \\
	& \sum\nolimits_{x_1\in\{1,\dots,k-1\}} \max \biggl\{ \frac{\max\{x_1,k\}}{2}, \min\{x_1,k\} \biggr\} + k \\
	={}& \beta_{k-1} + k + 2\sum\nolimits_{x_1\in\{1,dots,k-1\}} \max \biggl\{ \frac{k}{2},x_1 \biggr\}\\
	={}& \beta_{k-1} + \frac{5}{4}k^2 + o(k^2)
\end{align*}
Since $\beta_1=1$,
\begin{equation*}
	\beta_k = 1 + \sum_{\ell=1}^k \biggl( \frac{5}{4}k^2 + o(k^2) \biggr) = \frac{5}{12}k^3 + o(k^3)
\end{equation*}
and thus
\begin{equation*}
	\alpha_k(2) = \frac{7}{12} + \frac{o(k^3)}{k^3} .
\end{equation*}
This expression tends to $7/12$ as $k$ tends to infinity, and the claim follows by the same argument as in the proof of \thmref{thm:permutation}.
\end{proof}

One may wonder whether this bounds is tight, for the permutation mechanism or even in general. We leave this as an open question, but conclude by giving upper bounds of $2/3$ and $3/4$, respectively, on possible values of $\alpha$ for the permutation mechanism and any impartial mechanism.

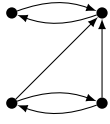
\begin{figure}[tb]
	\centering
	\begin{tikzpicture}[bend angle=20]
		\tikzstyle{every circle node}=[fill,inner sep=1.5pt,outer sep=0pt]
		\draw (0,0) node(1)[circle]{} ++(1.2,0) node(2)[circle]{} ++(0,-1.2) node(3)[circle]{} ++(-1.2,0) node(4)[circle]{};
		\foreach \x/\y in {3/2,4/2} {\draw[-latex] (\x) to (\y);}
		\foreach \x/\y in {1/2,2/1,3/4,4/3} {\draw[-latex] (\x) to[bend left] (\y);}
	\end{tikzpicture}
	\caption{A graph on which the permutation mechanism is $2/3$-optimal}
	\label{fig:perm_up}
\end{figure}
To see that the permutation mechanism cannot be more than $2/3$-optimal, consider the graph of \figref{fig:perm_up}. The unique vertex with degree~$3$ in this graph is selected by the permutation mechanism if and only if it appears in the last two positions of the permutation, which happens with probability $1/2$. Indeed, when it appears in one of the first two positions it has degree at most~$1$ at the time it is considered by the mechanism. At the same time, one of the vertices in the last two positions has degree~$1$ when it is considered and consequently gets selected. The expected degree of the selected vertex is thus $3\cdot 1/2+ 1\cdot 1/2=2$, compared to a maximum degree of~$3$. What is interesting about this bound is that it is attained for a graph with maximum degree~$3$. This suggests that a matching lower bound could not be obtained from a monotonicity result like that of \thmref{thm:alpha_k_mono}.

	The same upper bound of $2/3$ holds asymptotically for the more restricted case considered by \citet{HoMo13a} where every vertex has outdegree one. To see this, consider the graph $G=(N,E)$ with $N=\{1,\dots,n\}$ and $E=\{(i,i+1)\midd i=1,\dots,n-2\}\cup\{(n-1,1),(n,1)\}$, and observe that the permutation mechanism selects vertex~$1$, the unique vertex with degree~$2$, with significant probability only for permutations in which vertices~$n-1$ and~$n$ both occur before~$1$. Since the latter happens with probability exactly $1/3$, the expected degree of the selected vertex is not significantly greater than $2\cdot 1/3+ 1\cdot 2/3=4/3$, compared to a maximum degree of~$2$.

	Our final result establishes upper bounds on $\alpha$ for \emph{any} mechanism that is impartial and $\alpha$-optimal on $\G^{+}$, and for different values of~$n$. Intuitively these bounds arise as dual solutions of an optimization problem characterizing the $\alpha$-optimal impartial mechanisms for the maximum value of $\alpha$. These dual solutions are optimal, and the upper bound therefore tight, for $n\leq 7$.
\begin{theorem}  \label{thm:upper}
	Consider an impartial selection mechanism that is $\alpha$-optimal on $\G^{+}_n$. Then
	\begin{equation*}
		\alpha \geq \begin{cases}
		 	3 / 4 & \text{if $n=3$,} \\
			(3n-1) / 4n & \text{otherwise.}
		\end{cases}
	\end{equation*}
\end{theorem}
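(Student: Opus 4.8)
The plan is to regard the largest $\alpha$ attainable by an impartial mechanism on $\G^{+}_n$ as the optimum of a finite linear program and to certify the claimed upper bound by exhibiting a dual solution, as the surrounding discussion of dual solutions suggests. I would first apply \lemref{lem:symmetric} to restrict attention to \emph{symmetric} impartial mechanisms without changing the attainable $\alpha$: symmetrisation preserves both impartiality and the degree of optimality, and it lets the selection probabilities depend only on isomorphism types of graphs and on vertex orbits, so that for fixed $n$ the relevant data is finite.

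The LP then has variables $\alpha$ together with the probabilities $(f(G))_i$ for $G\in\G^{+}_n$, subject to non-negativity, the normalisations $\sum_i (f(G))_i = 1$, the impartiality equalities $(f(G))_i = (f(G'))_i$ whenever $G$ and $G'$ differ only in the outgoing edges of $i$, and the optimality constraints $\sum_i (f(G))_i\,\deg{}{i,G} \ge \alpha\,\Delta(G)$ for every $G$ with $\Delta(G)>0$; the objective is to maximise $\alpha$. An upper bound on $\alpha$ is then any dual-feasible combination: a finite family $G_1,\dots,G_m\in\G^{+}_n$, non-negative multipliers on their optimality constraints, and signed multipliers on the normalisations and impartiality equalities, arranged so that every probability variable cancels and the residue is an inequality $c\,\alpha\le c'$, giving $\alpha\le c'/c$.

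Two moves drive such a certificate. Impartiality lets me \emph{chain} two graphs differing only in one vertex's outgoing edges, forcing that vertex to be selected with identical probability in both; symmetry lets me \emph{cap} probabilities, since $k$ mutually symmetric vertices share the mass and each receives at most $1/k$. A single application already yields $3/4$ for $n=3$: on the graph $H$ with vertices $\{u,v,w\}$ and edges $u\to v$, $v\to u$, $w\to u$, $w\to v$, the vertices $u$ and $v$ are symmetric of degree~$2$, so $(f(H))_v\le 1/2$; redirecting $v$'s outgoing edge from $u$ to $w$ produces a graph $G\in\G^{+}_3$ in which $v$ is the unique maximum of degree~$2$ while $u$ and $w$ have degree~$1$, and since $G$ and $H$ differ only in $v$'s outgoing edges, impartiality gives $(f(G))_v=(f(H))_v\le 1/2$ and hence expected degree at most $\tfrac12\cdot 2+\tfrac12\cdot 1=\tfrac32$ against $\Delta(G)=2$. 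The point at which $\G^{+}$ enters is exactly here: $v$'s retained outgoing edge keeps some competitor at degree~$1$ rather than~$0$, which is why the bound is $3/4$ and not the $1/2$ of the general setting with abstentions.

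The main obstacle is sharpening this to $(3n-1)/4n$ for $n\ge 4$, where a single symmetric pair no longer suffices. Here I would search for a larger, highly symmetric family of instances whose optimality constraints can be averaged together, after using impartiality to move each target vertex's probability onto a common ground graph and a single normalisation to bound the total mass, so that the averaged right-hand side is exactly $(3n-1)/4n$. The delicate feature, and the reason a naive averaging overshoots towards $1/2$, is once more the out-degree constraint: a vertex's outgoing edges influence only the indegrees of its out-neighbours, so one cannot make a chosen vertex the \emph{unique} maximum while simultaneously driving every competitor to low degree, and the unavoidable residual competitor degree forced by $\G^{+}$ is what both lifts the limit to $3/4$ and produces the finite correction $-1/(4n)$. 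Choosing the family and the non-negative multipliers so that the probability variables cancel and the constant comes out precisely is the crux; for $n=3$ this sharper family degenerates, leaving only the generic bound $3/4$. Finally I would certify tightness for $n\le 7$ by exhibiting a matching symmetric impartial mechanism attaining the bound with equality, so that primal and dual together pin down $\alpha$ in that range, while the dual alone remains a valid upper bound for every larger $n$.
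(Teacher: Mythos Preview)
Your framework is exactly the one the paper uses: reduce to symmetric mechanisms via \lemref{lem:symmetric}, view the best attainable $\alpha$ as the value of a finite LP, and certify upper bounds by exhibiting small families of graphs linked through impartiality equalities and symmetry constraints. Your $n=3$ argument is complete and coincides with the paper's.

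The gap is in the case $n\ge 4$. You describe what must be done---find a family whose optimality constraints combine to the claimed bound---but give no construction, and the constructions are the substance of the proof. The paper does not use a single uniform family; it splits into three cases, each with its own gadget. For even $n\ge 4$, three graphs built on a four-vertex core (with the remaining $n-4$ vertices paired off in mutual edges) suffice: a fully symmetric graph pins $p_1=1/n$, a second graph with a two-fold symmetry forces $p_1+2p_2\le 1$ and hence $p_2\le (n-1)/(2n)$, and a third identifies $p_2$ via impartiality as the probability of the unique degree-$2$ vertex, giving $\alpha\le (p_2+1)/2=(3n-1)/(4n)$. For odd $n\ge 7$, a five-graph family built on a seven-vertex core (a $3$-cycle together with two mutual pairs) plays the analogous role; the normalisation $(n-3)p_1+3p_2=1$ forces $\max(p_1,p_2)\ge 1/n$, and two further chains yield $\alpha\le (3-\max(p_1,p_2))/4$. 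The case $n=5$ stands alone and needs six graphs and a genuinely more intricate linear combination of four relations to reach $\min(p_6,p_7)\le 2/5$. None of this structure---the even/odd split, the separate treatment of $n=5$, the specific cores---is visible from your outline, and a generic ``search for a larger, highly symmetric family'' would not surface it without substantial work. So while your plan is sound and your $n=3$ case is done, the proposal remains incomplete until those dual certificates are actually written down.
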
	
\begin{proof}
	By \lemref{lem:symmetric} we can restrict our attention to symmetric mechanisms.
	
	\begin{figure}[tb]
		\centering
		\begin{tikzpicture}[bend angle=25]
			\tikzstyle{every circle node}=[fill,inner sep=1.5pt,outer sep=0pt]
			\newcommand{\tri}[5]{\draw (0,0) +(150:.9) node(1)[circle]{} node[above left]{#3} +(30:.9)
node(2)[circle]{} node[above right]{#4} +(270:.9) node(3)[circle]{} node[below]{#5};
			\foreach \x/\y in {#1} {\draw[-latex] (\x) to (\y);}
			\foreach \x/\y in {#2} {\draw[-latex] (\x) to[bend left] (\y);}}
			\matrix[column sep=1.6cm,row sep=1.4cm]{
			\tri{3/1,3/2}{1/2,2/1}{$p_1$}{$p_1$}{} &
			\tri{2/1,3/2}{1/3,3/1}{$p_1$}{}{} \\};
		\end{tikzpicture}
		\caption{Impartial probability assignment for two graphs with $n=3$}
		\label{fig:oneplus3}
	\end{figure}
	First assume that $n=3$, and consider the two graphs shown in \figref{fig:oneplus3}. It is easily verified that any impartial mechanism must assign probabilities as shown, and it must therefore be the case that $p_1\leq\frac{1}{2}$. In the
graph on the right, the agent assigned probability $p_1$ is the unique agent with the maximum indegree of~$2$, and thus 
	\[
		\alpha \leq \frac{2 p_1 + (1-p_1)}{2} = \frac{p_1 + 1}{2} \leq \frac{3}{4} .
	\]
	
	\begin{figure}[tb]
		\centering
		\begin{tikzpicture}[bend angle=25]
			\tikzstyle{every circle node}=[fill,inner sep=1.5pt,outer sep=0pt]
			\newcommand{\sqrx}[6]{\draw (0,0) node(1)[circle]{} node[above left]{#3} ++(1.2,0) node(2)[circle]{}
node[above right]{#4} 
			++(0,-1.2) node(3)[circle]{} node[below right]{#5} 
			++(-1.2,0) node(4)[circle]{} node[below left]{#6} ++(0,0.52);
			\foreach \x/\y in {#1} {\draw[-latex] (\x) to (\y);}
			\foreach \x/\y in {#2} {\draw[-latex] (\x) to[bend left] (\y);}}
			\matrix[column sep=1.4cm,row sep=1.4cm]{
			\sqrx{}{1/2,2/1,3/4,4/3}{$p_1$}{$p_1$}{$p_1$}{$p_1$} &
			\sqrx{3/4,4/1,4/2}{1/2,2/1}{$p_2$}{$p_2$}{}{$p_1$} &
			\sqrx{1/2,2/3,3/4,4/1,4/2}{}{}{$p_2$}{}{} \\};
		\end{tikzpicture}
		\caption{Impartial probability assignment for three graphs with $n=4$}
		\label{fig:oneplus4}
	\end{figure}
	Now asume that $n\geq 4$ even, and consider the set of three graphs on~$n$ agents where agents~$1$ to~$4$ vote as in
\figref{fig:oneplus4} and the remaining $n-4$ agents are grouped in pairs such that agent $2i$ votes for agent $2i-1$ and
vice versa. It is easily verified that any impartial mechanism must assign probabilities as in \figref{fig:oneplus4}, and thus
$np_1=1$ and $p_1+2p_2\leq 1$. Moreover, the agent assigned probability $p_2$ in the rightmost graph is the unique agent
with indegree $2$ in that graph, and thus
	\[
		\alpha \leq \frac{2p_2+(1-p_2)}{2} = \frac{p_2+1}{2} \leq \frac{\frac{n-1}{2n}+1}{2} = \frac{3n-1}{4n} .
	\]
	
	\begin{figure}[tb]
		\centering
		\begin{tikzpicture}[bend angle=35]
			\tikzstyle{every circle node}=[fill,inner sep=1.5pt,outer sep=0pt]
			\newcommand{\pent}[7]{\draw (0,0) +(162:1) node(1)[circle]{} node[above left]{#3} +(90:1)
node(2)[circle]{} node[above]{#4} +(18:1) node(3)[circle]{} node[above right]{#5} +(306:1) node(4)[circle]{} node[below
right]{#6} +(234:1) node(5)[circle]{} node[below left]{#7};
			\foreach \x/\y in {#1} {\draw[-latex] (\x) to (\y);}
			\foreach \x/\y in {#2} {\draw[-latex] (\x) to[bend left] (\y);}}
			\matrix[column sep=1.1cm,row sep=.5cm]{
			\pent{1/2,2/3,3/4,4/5,5/1}{}{$p_1$}{$p_1$}{$p_1$}{$p_1$}{$p_1$} &
			\pent{4/5,5/1,1/2}{2/3,3/2}{$p_4$}{$p_5$}{$p_1$}{$p_2$}{$p_3$} &
			\pent{5/1,1/2,4/3}{2/3,3/2}{}{$p_6$}{$p_5$}{$p_2$}{$p_3$} \\
			\pent{4/5,5/1,1/2,1/3,3/2}{2/3}{$p_4$}{$p_7$}{$p_7$}{}{} &
			\pent{5/1,1/2,2/4,4/3,3/2}{}{}{$p_6$}{}{}{} &
			\pent{1/2,1/3,2/3,3/4,4/5,5/1}{}{}{}{$p_7$}{}{} \\};
		\end{tikzpicture}
		\caption{Impartial probability assignment for six graphs with $n=5$}
		\label{fig:oneplus5}
	\end{figure}
	Now asume that $n=5$, and consider the six graphs shown in \figref{fig:oneplus5}. It is easily verified that any impartial mechanism must assign probabilities as in \figref{fig:oneplus5}, so
	\begin{align}
		p_1 &= 1/5, \label{eq:d1} \\
		p_1 + p_2 + p_3 + p_4 + p_5 &= 1, \label{eq:d2} \\
		p_2 + p_3 + p_5 + p_6 &\leq 1, \label{eq:d3} \\
		p_4 + 2p_7 &\leq 1 . \label{eq:d4}
	\end{align}
	By adding \eqref{eq:d1}, \eqref{eq:d3}, and \eqref{eq:d4} and subtracting~\eqref{eq:d2},
	\[
		p_6 + 2p_7 \leq \frac{6}{5} \qquad\text{and thus}\qquad \min(p_6,p_7) \leq \frac{2}{5}.
	\]
	The agents assigned probabilities $p_6$ and $p_7$ in the two rightmost graphs in the bottom row of
\figref{fig:oneplus5} are the unique agents with indegree $2$ in those graphs, so
	\[
		\alpha \leq \frac{2p_6+(1-p_6)}{2} = \frac{p_6+1}{2} \qquad\text{and}\qquad \alpha \leq
\frac{2p_6+(1-p_6)}{2} = \frac{p_6+1}{2},
	\]
	and thus
	\[
		\alpha \leq \frac{\min(p_6,p_7)+1}{2} \leq \frac{\frac{2}{5}+1}{2} = \frac{7}{10} = \frac{3n-1}{4n}.
	\]
	
	\begin{figure}[tb]
		\centering
		\begin{tikzpicture}[bend angle=25]
			\tikzstyle{every circle node}=[fill,inner sep=1.5pt,outer sep=0pt]
			\newcommand{\hept}[9]{\draw (0,0) node(1)[circle]{} node[above left]{#3}
			++(0:1.2) node(2)[circle]{} node[above right]{#4} ++(-90:1.2) node(4)[circle]{} node[right]{#6}
++(180:1.2) node(3)[circle]{} node[left]{#5} ++(-60:1.2) node(5)[circle]{} node[right]{#7} ++(-60:1.2) node(7)[circle]{}
node[below right]{#9} ++(-180:1.2) node(6)[circle]{} node[below left]{#8};
			\foreach \x/\y in {#1} {\draw[-latex] (\x) to (\y);}
			\foreach \x/\y in {#2} {\draw[-latex] (\x) to[bend left] (\y);}}
			\matrix[column sep=1.1cm,row sep=1.4cm]{
			\hept{5/7,7/6,6/5}{1/2,2/1,3/4,4/3}{$p_1$}{$p_1$}{$p_1$}{$p_1$}{$p_2$}{$p_2$}{$p_2$} & 
			\hept{3/1,3/2,4/3,5/7,7/6,6/5}{1/2,2/1}{$p_3$}{$p_3$}{$p_1$}{}{}{}{} & 
			\hept{5/3,5/4,7/6,6/5}{1/2,2/1,3/4,4/3}{}{}{$p_4$}{$p_4$}{$p_2$}{}{} &
			\hept{1/2,2/4,4/3,3/1,3/2,5/7,7/6,6/5}{}{$p_3$}{}{}{}{}{}{} & 
			\hept{3/4,4/7,5/3,5/4,6/5,7/6}{1/2,2/1}{}{}{}{$p_4$}{}{}{} \\ };
		\end{tikzpicture}
		\caption{Impartial probability assignment for five graphs with $n=7$}
		\label{fig:oneplus7}
	\end{figure}
	Finally assume that $n\geq 7$ odd, and consider the set of five graphs on~$n$ agents where agents~$1$ to~$7$ vote as
in \figref{fig:oneplus7} and the remaining $n-7$ agents are grouped in pairs such that agent $2i$ votes for agent $2i-1$ and
vice versa. It is easily verified that any impartial mechanism must assign probabilities as in \figref{fig:oneplus7}, so
	\begin{align*}
		(n-3)p_1 + 3p_2 &=1, \\
		p_1 + 2p_3 &\leq 1, \\
		p_2 + 2p_4 &\leq 1 .
	\end{align*}
	The agents assigned probabilities $p_3$ and $p_4$ in the two rightmost graphs are the unique agents with indegree
$2$ in those graphs, so
	\begin{align*}
		\alpha &\leq \frac{2p_3+(1-p_3)}{2} = \frac{p_3+1}{2} \leq
		\frac{\frac{1-p_1}{2}+1}{2} = \frac{3-p_1}{4} , \\
		\alpha &\leq \frac{2p_4+(1-p_4)}{2} = \frac{p_4+1}{2} \leq
		\frac{\frac{1-p_2}{2}+1}{2} = \frac{3-p_2}{4} ,
	\end{align*}
	and thus
	\[
		\alpha \leq \frac{3-\max(p_1,p_2)}{4} \leq \frac{3-\frac{1}{n}}{4} = \frac{3n-1}{4n} ,
	\]
	where the second inequality holds because $\max(p_1,p_2)\geq 1/n$.
\end{proof}

Somewhat surprisingly, restricting the set of graphs even further, by requiring that every vertex has outdegree \emph{exactly}~$1$, does not enable significantly better impartial mechanism. Using similar arguments as in the proof of \thmref{thm:upper}, it can be shown that in this case any impartial and $\alpha$-optimal mechanism must satisfy $\alpha\leq 5/6$ if $n=3$, $\alpha\leq (6n-1)/8n$ if $n\geq 6$ and even, and $\alpha\leq 3/4$ otherwise. These bounds are tight for $n\leq 9$.

\end{document}